\DeclareMathAlphabet{\pazocal}{OMS}{zplm}{m}{n}
\newcommand{\mypara}[1]{\vspace*{0.05in}\noindent\textbf{#1.}$\;$}
\newtheorem{theorem}{Theorem}
\newtheorem{lemma}[theorem]{Lemma}
\newtheorem{prop}[theorem]{Proposition}
\newtheorem{cor}[theorem]{Corollary}
\newtheorem{definition}{Definition}
\newenvironment{proofsketch}{\par{\noindent \bf Proof Sketch:}}{\qed\\\par}
\newcommand{\EV}[1]{\ensuremath{\mathbb{E}\left[\, #1 \,\right]}}
\newcommand{\ind}[1]{\mathds{1}_{\left\{#1\right\}}}
\newcommand{\Var}[1]{\ensuremath{\mathsf{Var}\left[#1\right]}\xspace}
\newcommand{\var}{\ensuremath{\mathsf{Var}}\xspace}
\newcommand{\mse}{\ensuremath{\mbox{MSE}}\xspace}
\renewcommand{\Pr}[1]{\ensuremath{\mathsf{Pr} \left[#1\right] }\xspace}
\newcommand{\Prd}[2]{\ensuremath{\mathsf{Pr}^{#1} \left[#2\right] }\xspace}
\newcommand{\tuple}[1]{\ensuremath{\langle #1\rangle}\xspace}
\newcommand{\Domain}{\ensuremath{\mathcal{D}}\xspace}
\renewcommand{\AA}{\mathbf{A}}
\newcommand{\olh}{\ensuremath{\mathsf{OLH}}\xspace}
\newcommand{\had}{\ensuremath{\mathsf{Had}}\xspace}
\newcommand{\rap}{\ensuremath{\mathsf{RAP}}\xspace}
\newcommand{\rapr}{\ensuremath{\mathsf{RAP_R}}\xspace}
\newcommand{\grr}{\ensuremath{\mathsf{GRR}}\xspace}
\newcommand{\fo}{\ensuremath{\mathsf{FO}}\xspace}
\newcommand{\Adv}{\ensuremath{\mathsf{Adv}}\xspace}
\newcommand{\result}{\ensuremath{R}\xspace}
\newcommand{\results}{\ensuremath{\mathbf{R}}\xspace}
\newcommand{\myenc}{\ensuremath{\mbox{Enc}}\xspace}
\newcommand{\mydec}{\ensuremath{\mbox{Dec}}\xspace}
\newcommand{\MN}{\ensuremath{\mathsf{\mathsf{SH}}}\xspace}
\newcommand{\slh}{\ensuremath{\mathsf{SOLH}}\xspace}
\newcommand{\slhfull}{Shuffler-Optimal Local Hash\xspace}
\newcommand{\aue}{\ensuremath{\mathsf{AUE}}\xspace}
\newcommand{\methodseq}{\ensuremath{\mathsf{SS}}\xspace}
\newcommand{\methodsmc}{\ensuremath{\mathsf{PEOS}}\xspace}
\newcommand{\mursofull}{Private Encrypted Oblivious Shuffle\xspace}
\newcommand{\OSH}{\ensuremath{\mbox{oblivious shuffle}}\xspace}
\newcommand{\OSHA}{\ensuremath{\mbox{EOS}}\xspace}
\newcommand{\OSHAfull}{Encrypted Oblivious Shuffle\xspace}
\algnewcommand\User{\item[\textbf{User $i$:}]}
\algnewcommand\AuxServer{\item[\textbf{Shuffler $j$:}]}
\algnewcommand\AuxServerR{\item[\textbf{Shuffler $r$:}]}
\algnewcommand\AuxServerP{\item[\textbf{Shuffler $j\in [r-1]$:}]}
\algnewcommand\Server{\item[\textbf{Server:}]}
\algrenewcommand\algorithmicindent{1.0em}\newcommand{\mytitle}{Improving Utility and Security of the Shuffler-based Differential Privacy}
\begin{document}
\title{\mytitle}

\author{
	\IEEEauthorblockN{
		Tianhao Wang$^1$, Bolin Ding$^2$, Min Xu$^3$, Zhicong Huang$^2$, Cheng Hong$^2$, Jingren Zhou$^2$, Ninghui Li$^1$, Somesh Jha$^4$
	}
	\IEEEauthorblockA{
		$^1$Purdue University, $^2$Alibaba, $^3$University of Chicago, $^4$University of Wisconsin-Madison
	}
}

%\IEEEoverridecommandlockouts
%\makeatletter\def\@IEEEpubidpullup{6.5\baselineskip}\makeatother
%\IEEEpubid{\parbox{\columnwidth}{
%		Network and Distributed Systems Security (NDSS) Symposium 2021\\
%		21-24 February 2021, San Diego, CA, USA\\
%		ISBN 1-891562-61-4\\
%		https://dx.doi.org/10.14722/ndss.2021.23432\\
%		www.ndss-symposium.org
%	}
%	\hspace{\columnsep}\makebox[\columnwidth]{}}

\maketitle

\begin{abstract}
	When collecting information, local differential privacy (LDP) alleviates privacy concerns of users because their private information is randomized before being sent it to the central aggregator. LDP imposes large amount of noise as each user executes the randomization independently. To address this issue, recent work introduced an intermediate server with the assumption that this intermediate server does not collude with the aggregator. Under this assumption, less noise can be added to achieve the same privacy guarantee as LDP, thus improving utility for the data collection task.

	This paper investigates this multiple-party setting of LDP. We analyze the system model and identify potential adversaries. We then make two improvements: a new algorithm that achieves a better privacy-utility tradeoff; and a novel protocol that provides better protection against various attacks. Finally, we perform experiments to compare different methods and demonstrate the benefits of using our proposed method.
\end{abstract}

\section{Introduction}
\label{sec:intro}

To protect data privacy in the context of data publishing, differential privacy (DP)~\cite{Dwo06} is proposed and widely accepted as the standard of formal privacy guarantee.  DP mechanisms allow a server to collect users' data, add noise to the aggregated result, and publish the result.
More recently, local differential privacy (LDP) has been proposed~\cite{Duchi:2013:LPS}.  LDP differs from DP in that random noise is added by each user before the data is sent to the central server.  Thus, users do not need to trust the server.
This desirable feature of LDP has led to wider deployment by industry~\cite{rappor,apple-dp,ding2017collecting,sigmod:WangDZHHLJ19}.
Meanwhile, DP is still deployed in settings where the centralized server can be trusted (e.g., the US Census Bureau deployed DP for the 2020 census~\cite{uscensus}).
However, removing the trusted central party comes at the cost of utility.  Since every user adds some independently generated noise, the effect of noise adds up when aggregating the result.  As a result, while noise of scale (standard deviation) $\Theta(1)$ suffices for DP, LDP has noise of scale $\Theta(\sqrt{n})$ on the aggregated result ($n$ is the number of users). This gap is fundamental for eliminating the trust in the centralized server, and cannot be removed by algorithmic improvements~\cite{chan2019foundations}.

Recently, researchers introduced settings where one can achieve a middle ground between DP and LDP, in terms of both privacy and utility.  This is achieved by introducing an additional party~\cite{cheu2018distributed,erlingsson2019amplification,balle2019privacy,chowdhury2019outis}.  The setting is called the {\it shuffler model}.  In this model, each user adds LDP noise to data, encrypt it, and then send it to the new party called the shuffler.  The shuffler permutes the users' reported data, and then sends them to the server.  Finally the server decrypts the reports and obtains the result.  In this process, the shuffler only knows which report comes from which user, but does not know the content.  On the other hand, the server cannot link a user to a report because the reports are shuffled.
The role of the shuffler is to break the linkage between the users and the reports.  Intuitively, this anonymity can provide some privacy benefit.  {Therefore}, users can add less noise while achieving the same level of privacy.

In this paper, we study this new model from two perspectives.  First, we examine from the algorithmic aspect, and make improvement to existing techniques.
More specifically, in~\cite{balle2019privacy}, it is shown the essence of the privacy benefit comes from a ``noise'' whose distribution is independent of the input value, also called privacy blanket.  While existing work leverages this, it only works well when each user's value is drawn from a small domain.  To obtain a similar privacy benefit when the domain is large, we propose to use the local hashing idea (also considered in the LDP setting~\cite{Bassily2015local,wang2017locally,bassily2017practical,aistats:AcharyaSZ18}).  That is, each user selects a random hash function, and uses LDP to report the hashed result, together with the selected hash function.  By analyzing the utility and optimizing the parameters with respect to the utility metric (mean squared error), we present an algorithm that achieves accuracy orders of magnitude better than existing method.  We call {it} \slhfull (\slh).

We then work from the security aspect of the model.  We review the system setting of this model and identify two types of attack that were overlooked: collusion attack and data-poisoning attack.  Specifically, as there are more parties involved, there might exist collusions.  While existing work assumes non-collusion, we explicitly consider the consequences of collusions among different parties and propose a protocol \mursofull (\methodsmc) that is safer under these colluding scenarios.
The other attack considers the setting where the additional party introduces calibrated noise to bias the result or break the privacy protection.  To overcome this, our protocol \methodsmc takes advantage of cryptographic tools to prevent the shufflers from adding arbitrary noise.

To summarize, we provide a systematic analysis of the shuffler-based DP model.  Our main contributions are:

\begin{itemize}
	\item
	      We improve the utility of the model and propose \slh.

	\item
	      We design a protocol \methodsmc that provides better trust guarantees.

	\item
	      We provide implementation details and measure utility and execution performance of \methodsmc on real datasets.  Results from our evaluation are encouraging.
\end{itemize}

\section{Background}
\label{sec:back}

{We assume} each user possesses a value $v$ from a finite, discrete domain $\Domain$, and the goal is to estimate frequency of $v\in\Domain$.

\subsection{Differential Privacy}
\label{subsec:dp}
Differential privacy is a rigorous notion about individual's privacy in the setting where there is a trusted data curator, who gathers data from individual users, processes the data in a way that satisfies DP, and then publishes the results.  Intuitively, the DP notion requires that any single element in a dataset has only a limited impact on the output.

\begin{definition}[Differential Privacy] \label{def:dp}
	An algorithm $\AA$ satisfies $(\epsilon, \delta)$-DP, where $\epsilon, \delta \geq 0$,
	if and only if for any neighboring datasets $D$ and $D'$, and any set \results of possible outputs of $\AA$,
	\begin{equation*}
		\Pr{\AA(D)\in \results} \leq e^{\epsilon}\, \Pr{\AA(D') \in \results} + \delta
	\end{equation*}
\end{definition}
Denote a dataset as $D=\tuple{v_1, v_2, \ldots, v_n}$, where each $v_i$ is from some domain $\Domain$.  Two datasets $D=\tuple{v_1, v_2, \ldots, v_n}$ and $D'=\tuple{v'_1, v'_2, \ldots, v'_n}$ are said to be neighbors, or $D\simeq D'$, iff there exists at most one $i\in [n]=\{1,\ldots, n\}$ such that $v_i\neq v'_i$, and $v_j=v'_j$ for any other $j\neq i$.  When $\delta=0$, we simplify the notation and call $(\epsilon, 0)$-DP as $\epsilon$-DP.

\subsection{Local Differential Privacy}
\label{subsec:ldp}

Compared to the centralized setting, the local version of DP offers a stronger level of protection, because each user only reports the noisy data rather than the true data. Each user's privacy is still protected even if the server is malicious.

\begin{definition}[Local Differential Privacy] \label{def:ldp}
	An algorithm $\AA(\cdot)$ satisfies $(\epsilon, \delta)$-local differential privacy ($(\epsilon, \delta)$-LDP), where $\epsilon, \delta \geq 0$,
	if and only if for any pair of input values $v, v' \in \Domain$, and any set \results of possible outputs of $\AA$, we have
	\begin{equation*}
		\Pr{\AA(v)\in \results} \leq e^{\epsilon}\, \Pr{\AA(v')\in \results} + \delta
	\end{equation*}
\end{definition}

Typically, $\delta = 0$ in LDP ({thus} $\epsilon$-LDP).
We review the perturbation-based LDP mechanisms that will be used in the paper.

\mypara{Generalized Randomized Response}
\label{subsubsec:rr}
The basic mechanism in LDP is called randomized response~\cite{Warner65}.  It was introduced for the binary case (i.e., $\Domain=\{0,1\}$), but can be easily generalized.  Here we describe the generalized version of random response (\grr).

In \grr, each user with private value $v\in \Domain$ sends $\grr(v)$ to the server, where $\grr(v)$ outputs the true value $v$ with probability $p$, and a randomly chosen $v'\in \Domain$ where $v'\ne v$ with probability $1-p$.  Denote the size of the domain as $d=|\Domain|$, we have
\begin{align}
	\forall_{y \in \Domain}\;\Pr{\grr(v) = y}  = \left\{
	\begin{array}{lr}
		p=\frac{e^\epsilon}{e^\epsilon + d - 1}, & \mbox{if} \; y = v    \\
		q= \frac{1}{e^\epsilon + d - 1},         & \mbox{if} \; y \neq v \\
	\end{array}\label{eq:grr}
	\right.
\end{align}
This satisfies $\epsilon$-LDP since $\frac{p}{q}=e^\epsilon$.
To estimate the frequency of $\tilde{f}_v$ for $v\in \Domain$, one counts how many times $v$ is reported, denoted by $\sum_{i\in[n]}\ind{y_i=v}$, and then computes
\begin{align}
	\tilde{f}_v =  \frac{1}{n}\sum_{i\in[n]}\frac{\ind{y_i=v}-q}{p-q} \label{eq:grr_aggregate}
\end{align}
where $\ind{y_i=v}$ is the indicator function that tells whether the report of the $i$-th user $y_i$ equals $v$, and $n$ is the total number of users.

\mypara{Local Hashing}
When $d$ is large, the $p$ value in Equation~\eqref{eq:grr} becomes small, making the result inaccurate.  To overcome this issue, the local hashing idea~\cite{Bassily2015local} lets each user map $v$ to one bit, and then use \grr to perturb it.  More formally, each user reports $\tuple{H,\grr(H(v))}$ to the server, where $H$ is the mapping (hashing) function randomly chosen from a universal hash family.
In this protocol, both the hashing step and the randomization step result in information loss.  Later, Wang et al.~\cite{wang2017locally} realized $H$ does not necessarily hashes $v$ to one bit.  In fact, the output domain size $d'$ of $H$ is a tradeoff.  The optimal $d'$ is $e^\epsilon+1$.  And the method is called Optimized Local Hash (\olh).

Similar to \grr, the result of \olh needs to be calibrated.  Let $\tuple{H_i,y_i}$ be the report from the $i$'th user.
For each value $v\in \Domain$, to compute its frequency, one first computes $\sum_{i\in[n]}\ind{ H_i(v) = y_i}$ $=|\{i\mid H_i(v) = y_i\}|$,
and then computes
\begin{align}
	\tilde{f}_v = \frac{1}{n}\sum_{i\in[n]}\frac{\ind{ H_i(v) = y_i} - 1/d'}{p - 1/d'}\label{eq:olh_aggregate}
\end{align}

\subsection{Cryptographic Primitives}
\label{subsec:crypto_primitive}
We briefly review the cryptographic primitives that will be used.
Note that throughout this paper, we assume that the cryptographic tools are secure.

\mypara{Additive Homomorphic Encryption}
In Additive Homomorphic Encryption (AHE)~\cite{paillier1999public},
one can apply an algebraic operation (denoted by $\oplus$, e.g., multiplication) to two ciphertexts $c_1$, $c_2$, and get the ciphertext of the addition of the corresponding plaintexts.
More formally, there are two functions, encrypt function $\myenc$ and decrypt function $\mydec$.
Given two ciphertexts $c_1=\myenc(v_1)$ and $c_2=\myenc(v_2)$, we have $c_1\oplus c_2=\myenc(v_1+v_2)$.

\mypara{Additive Secret Sharing}
In this technique, a user splits a secret value $v\in\{0,\ldots, d-1\}$ into $r>1$ shares $\tuple{s_i}_{i\in [r]}$, where $r-1$ of them are randomly selected, and the last one is computed so that $\sum_i s_{i} \mod d = v$.  The shares are then sent to $r$ parties, so that each party only sees a random value, and $v$ cannot be recovered unless all the $r$ parties collaborate.

\mypara{Oblivious Shuffle}
In order to prevent the shuffler from knowing the mapping between the input and the output, oblivious shuffle introduces multiple shufflers.
A natural method is to connect the shufflers sequentially; and each shuffler applies a random shuffle.
Another way of achieving oblivious shuffle is the resharing-based shuffle~\cite{bogdanov2008sharemind,laur2011round} which utilizes secret sharing.
Suppose there are $r$ shufflers.  The users send their values to shufflers using secret sharing.  Define $t=\lfloor r/2 \rfloor + 1$ as the number of ``hiders'', and $r-t$ as the number of ``seekers''.  The resharing-based oblivious shuffle~\cite{laur2011round} proceeds like a ``hide and seek'' game.  In particular, there are ${r \choose t}$ partitions of the $r$ auxiliary servers into hiders and seekers.  For each partition, the seekers each splits its vector of shares into $t$ parts and sends them to the $t$ hiders, respectively.  Then the hiders accumulate the shares and shuffle their vectors using an agreed permutation.  The shuffled vectors are then split into $r$ shares and distributed to all of the $r$ auxiliary servers.  Note that now only the $t$ hiders know the permutation order.  The process proceeds for ${r \choose t}$ rounds to ensure that none of the colluding $r - t$ auxiliary servers know about the final permutation order.

\section{Problem Definition and Existing Techniques}
\label{sec:privacy_amp}

\subsection{Problem Definition}

Throughout the paper, we focus on the problem of histogram estimation, which is typically used for solving other problems in the LDP setting.  We assume there are $n$ users; each user $i$ possesses a value $v_i\in\Domain$.  The frequency of value $v\in \Domain$ is represented by ${f}_v = \frac{1}{n}\sum_{i\in[n]} \ind{v_i=v}$.  The server's goal is to estimate the frequency for each $v$, denoted by $\tilde{f}_v$.  The accuracy is measured by the mean squared error of the estimation, i.e., $\frac{1}{|\Domain|}\sum_{v\in\Domain}(f_v-\tilde{f}_v)^2$.

We consider the shuffler model, which is the middle ground between DP and LDP.  In particular, an auxiliary server called the shuffler is introduced.  Users need to trust that the auxiliary server does not collude with the original server.

\subsection{Privacy Amplification via Shuffling}
\label{subsec:amp}

The shuffling idea was originally proposed in Prochlo~\cite{bittau2017prochlo}, where a shuffler is inserted between the users and the server to break the linkage between the report and the user identification.  The privacy benefit was investigated in~\cite{cheu2018distributed,erlingsson2019amplification,balle2019privacy}.
It is proven that when each user reports the private value using \grr with $\epsilon_l$-LDP, applying shuffling ensures centralized $(\epsilon_c,\delta)$-DP, where $\epsilon_c<\epsilon_l$.
Table~\ref{tbl:mn_compare} gives a summary of these results.
Among them, \cite{balle2019privacy} provides the strongest result in the sense that the $\epsilon_c$ is the smallest, and the proof technique can be applied to other LDP protocols.

\begin{table}[!ht]
	\centering
	\resizebox{\columnwidth}{!}{
		\begin{tabular}{@{}c|c|c@{}}\toprule
			Method                             & Condition                                                     & $\epsilon_c$                                                    \\\midrule
			\cite{erlingsson2019amplification} & $ \epsilon_l < 1/2$                                           & $\sqrt{144\ln (1/\delta)\cdot \frac{\epsilon_l^2}{n}}$          \\ \hline
			\cite{cheu2018distributed}         & $\sqrt{\frac{192}{n}\ln (4/\delta)} < \epsilon_c < 1$, binary & $\sqrt{32\ln (4/\delta)\cdot \frac{e^{\epsilon_l}+1}{n}}$       \\ \hline
			\cite{balle2019privacy}            & $\sqrt{\frac{14\ln (2/\delta)d}{n-1}} < \epsilon_c \le 1$     & $\sqrt{14\ln (2/\delta)\cdot \frac{e^{\epsilon_l}+d-1}{(n-1)}}$ \\ \bottomrule
		\end{tabular}
	}
	\caption{Privacy amplification result comparison.  Each row corresponds to a method.  The amplified $\epsilon_c$ only differs in constants.  The circumstances under which the method can be used are different.  }\label{tbl:mn_compare}
\end{table}

\mypara{Recent Results}
Parallel to our work, \cite{Balcer2019,ghazi2019private} propose mechanisms other than \grr to improve utility in this model.  They both rely on the privacy blanket idea.  The method in~\cite{Balcer2019} gives better utility as it does not depend on $|\Domain|$.  However, the communication cost for each user is linear in $|\Domain|$, which is undesirable when $|\Domain|$ is large.  Moreover, its accuracy is worse than the method proposed in our paper.  We will analytically and empirically compare with~\cite{Balcer2019}.

\section{Improving Utility of the Shuffler Model}
\label{sec:slh}
We first review the intuition behind the privacy amplification proof, which is called the ``privacy blanket''.  Then we borrow the local hashing idea in LDP, and design a local hashing method that optimizes
accuracy in the shuffler setting.

\subsection{Privacy Blanket}
\label{subsec:blanket}

The technique used in~\cite{balle2019privacy} is called \textit{blanket decomposition}.  The idea is to decompose the probability distribution of an LDP report into two distributions, one dependent on the true value and the other independently random; and this independent distribution forms a ``privacy blanket''.
In particular, the output distribution of $\grr$ given in Equation~\eqref{eq:grr} is decomposed into
\begin{align*}
	\forall_{y \in \Domain}\;\Pr{\grr(v) = y}  = (1-\gamma)\Prd{v}{y}+\gamma\;\Pr{\mathsf{Uni}(\Domain) = y}
\end{align*}
where $\Prd{v}{y}$ is the distribution that depends on $v$, and $\mathsf{Uni}(\Domain)$ is uniformly random with $\Pr{\mathsf{Uni}(\Domain) = y}=1/d$.  With probability $1-\gamma$, the output is dependent on the true input; and with probability $\gamma$, the output is random.
Given $n$ users, the $n - 1$ (except the victim's) such random variables can be seen as containing some uniform noise (i.e., the $\gamma\; \Pr{\mathsf{Uni}(\Domain) = y}$ part).  For each value $v\in \Domain$, the noise follows $\mathsf{Bin}(n - 1, \gamma/d)$.  Intuitively, this noise makes the output uncertain.  The following theorem, which is derived from Theorem 3.1 of~\cite{balle2019privacy}, formalizes this fact.

\begin{theorem}[Binomial Mechanism]\label{thm:rr_blanket}
	Binomial mechanism adds independent noise $\mathsf{Bin}(n, p)$ to each component of the histogram.  It satisfies $(\epsilon_c, \delta)$-DP where
	\begin{align*}
		\epsilon_c = \sqrt{\frac{14\ln (2/\delta)}{np}}
	\end{align*}
\end{theorem}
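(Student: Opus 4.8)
The plan is to treat the binomial mechanism as a discrete analogue of the Gaussian mechanism: I would bound its privacy-loss random variable by exploiting the concentration of $\mathsf{Bin}(n,p)$ around its mean $np$, and charge the low-probability tail to the additive term $\delta$. The starting point is a sensitivity analysis. Two neighboring datasets differ in a single user's value, so the underlying histogram changes by moving one unit of mass from the old value's component to the new value's component; at most two components change, each by $\pm 1$. Because the noise is drawn \emph{independently} per component, the privacy loss of the whole histogram decomposes as a sum of the per-component privacy losses over these (at most two) affected components. This reduces the theorem to the one-dimensional question: for $N\sim\mathsf{Bin}(n,p)$, how much can $\Pr{c+N=o}$ differ from $\Pr{(c\pm1)+N=o}$?

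Next I would analyze the single-component privacy-loss random variable. Writing the affected component's output as $o=c+N$ and using the pmf $\Pr{N=k}=\binom{n}{k}p^{k}(1-p)^{n-k}$, the consecutive likelihood ratio is $\frac{(n-k+1)p}{k(1-p)}$, so the per-component loss $\ell(o)$ is an explicit function of $k=o-c$. Under the output distribution induced by the first dataset, $k$ is itself distributed as $N$. The key structural observation is that $\ell$ is small precisely when $N$ sits near its mean $np$ and grows only in the tails, so the whole argument hinges on how tightly $N$ concentrates.

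Finally I would apply a Chernoff/Bernstein tail bound to argue that $|N-np|=O(\sqrt{np\,\ln(1/\delta)})$ except with probability at most $\delta$. On this high-probability event, a first-order expansion of the log-ratio around $k=np$ bounds $|\ell|$; summing the contributions of the two changed components and accounting for the two-sided deviation yields, after tracking constants, the stated $\epsilon_c=\sqrt{14\ln(2/\delta)/(np)}$, while the excluded tail is absorbed into $\delta$. Invoking the standard characterization—privacy loss bounded by $\epsilon_c$ except on an event of probability $\delta$ implies $(\epsilon_c,\delta)$-DP—then closes the argument.

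The hard part will be controlling the log-ratio of binomial pmfs uniformly over the concentration region \emph{without} simply replacing the binomial by a Gaussian: for small $p$ the binomial is skewed and its tails are asymmetric, so obtaining the clean constant $14$ (rather than the naive factor of $4$ a crude Gaussian comparison suggests) requires a careful second-order, Bernstein-type bound matched to the deviation estimate—essentially reproducing the moment-based calculation behind Theorem 3.1 of~\cite{balle2019privacy}. A secondary subtlety is confirming that the true worst case is exactly the coupled two-component $\pm1$ shift, and that independence of the per-component noise legitimately lets the two privacy losses be added.
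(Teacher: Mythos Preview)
The paper does not prove this theorem; it is simply quoted as ``derived from Theorem~3.1 of~\cite{balle2019privacy}.''  The only place the underlying argument surfaces is in the appendix (inside the proof of Theorem~\ref{thm:olh_privacy}), and there the route is different from yours.  In the privacy-blanket framework the privacy loss is not written as a consecutive pmf ratio at all; it is expressed as the ratio of two \emph{counts}, $N/N'$, where $N\sim 1+\mathsf{Bin}(n-1,p)$ and $N'\gtrsim\mathsf{Bin}(n-1,p)$ are the numbers of random (``blanket'') reports that happen to match $v_n$ and $v'_n$ respectively.  One then sets $\theta=(n-1)p$ and Chernoff-bounds each side separately, $\Pr{N\ge\theta e^{\epsilon_c/2}}\le\delta/2$ and $\Pr{N'\le\theta e^{-\epsilon_c/2}}\le\delta/2$, which directly yields the constant $14$.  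No per-component decomposition, no pmf algebra, and no summing over two coordinates is needed---which is also why Theorem~\ref{thm:rr_blanket} is invoked \emph{once per affected location} in the proof of Theorem~\ref{thm:sue_privacy} (and then doubled), rather than already containing the factor~$2$ you build in.

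Your approach---writing the loss as $\log\frac{(k+1)(1-p)}{(n-k)p}$ and controlling it on a Chernoff ball around $k=np$---is a legitimate alternative for the literal ``add $\mathsf{Bin}(n,p)$ to each coordinate'' mechanism.  It is closer in spirit to the Gaussian-mechanism analysis and would give a bound of the form $\sqrt{C\ln(1/\delta)/(np(1-p))}$; for the small-$p$ regime used everywhere in the paper this matches $\sqrt{C\ln(1/\delta)/(np)}$, but recovering exactly $C=14$ would require redoing the multiplicative-Chernoff calculation from~\cite{balle2019privacy} rather than the additive/Bernstein expansion you sketch.  So your plan is sound, but it is a genuinely different derivation from the one the paper relies on, and your two-component bookkeeping is actually one level of composition \emph{above} what Theorem~\ref{thm:rr_blanket} asserts.
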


In Theorem~\ref{thm:rr_blanket}, the larger $\gamma$ is, the better the privacy.  Given \grr, we can maximize $\gamma$ by setting $\Prd{v}{y} = \ind{v=y}$, which gives us $\gamma = \frac{d}{e^{\epsilon_l}+d-1}$.  The binomial noise $\mathsf{Bin}(n - 1, \frac{1}{e^{\epsilon_l}+d-1})$ thus provides $(\sqrt{14\ln (2/\delta)\cdot \frac{e^{\epsilon_l}+d-1}{(n-1)}}, \delta)$-DP~\cite{balle2019privacy}.
One limitation of~\cite{balle2019privacy} is that
as \grr is used, the accuracy downgrades with domain size $d$.

\subsection{\slhfull}
\label{subsec:slh}

In order to benefit from the shuffler model in the case when the domain size $d$ is large, the key is to derive a mechanism whose utility does not degrade with $d$.

\subsubsection{Unary Encoding for Shuffling}
\label{sec:unary_encoding}
We first revisit the unary-encoding-based methods, also known as the basic RAPPOR~\cite{rappor}, and show that this class of methods can enjoy the benefit of the privacy blanket argument.
In particular, in unary-encoding, the value $v$ is transformed into a vector $B$ of size $d$, where $B[v]=1$ and the other locations of $B$ are zeros (note that this requires values of the domain $\Domain$ be indexed from $1$ to $d$).  Then each bit $b$ of $B$ is perturbed to $1-b$ independently.
To satisfy LDP, the perturbation probability is set to $\frac{1}{e^{\epsilon/2}+1}$.  Note that we use $\epsilon/2$ because for any two values $v$ and $v'$, their corresponding unary encodings differ by two bits.   We can apply the privacy blanket argument and prove that a $\epsilon_l$-LDP unary-encoding method satisfies $(\epsilon_c,\delta)$-DP after shuffling.

\begin{theorem}
	\label{thm:sue_privacy}
	Given an $\epsilon_l$-LDP unary-encoding method, after shuffling, the protocol is $(\epsilon_c,\delta)$-DP, where
	\begin{align}
		\epsilon_c & = 2\sqrt{14\ln (4/\delta) \cdot \frac{e^{\epsilon_l/2} + 1}{n-1}}\nonumber
	\end{align}
\end{theorem}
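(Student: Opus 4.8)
The plan is to reduce the statement to the Binomial Mechanism (Theorem~\ref{thm:rr_blanket}) by applying the privacy-blanket decomposition to each bit of the unary encoding separately, and then to account for the fact that a single user's change flips exactly \emph{two} bits of the encoding. Because each bit of $B$ is perturbed independently, the per-coordinate counts decouple, which is precisely what makes the per-bit reduction legitimate.

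First I would decompose the single-bit perturbation. A true bit $b$ is reported as $b$ with probability $\frac{e^{\epsilon_l/2}}{e^{\epsilon_l/2}+1}$ and flipped with probability $\frac{1}{e^{\epsilon_l/2}+1}$. Writing this output law as $(1-\gamma)\,\Prd{b}{y}+\gamma\,\Pr{\mathsf{Uni}(\{0,1\})=y}$, with the blanket placing mass $1/2$ on each of $\{0,1\}$, the blanket weight is maximized by taking $\Prd{b}{b}=1$, which forces the flip mass to come entirely from the blanket: $\gamma\cdot\tfrac12=\frac{1}{e^{\epsilon_l/2}+1}$, i.e. $\gamma=\frac{2}{e^{\epsilon_l/2}+1}$. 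Consequently, at any fixed coordinate $j$ the other $n-1$ users each deposit a uniform blanket bit equal to $1$ with probability $\gamma/2=\frac{1}{e^{\epsilon_l/2}+1}$, so the aggregate blanket noise at coordinate $j$ is $\mathsf{Bin}\!\left(n-1,\frac{1}{e^{\epsilon_l/2}+1}\right)$, with the non-blanket users contributing only a data-independent constant offset.

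Next I would invoke Theorem~\ref{thm:rr_blanket} with $p=\frac{1}{e^{\epsilon_l/2}+1}$ and $n-1$ users, so that a \emph{single} affected coordinate is protected at level $\sqrt{14\ln(2/\delta')\cdot\frac{e^{\epsilon_l/2}+1}{n-1}}$ with failure probability $\delta'$. The key structural observation—and the point where unary encoding departs from the \grr analysis—is that changing one user's value from $v$ to $v'$ leaves every true bit of the encoding fixed except at coordinates $v$ (whose true bit goes $1\!\to\!0$) and $v'$ (whose true bit goes $0\!\to\!1$). Since these two coordinates are perturbed with independent coins, they are two independent single-coordinate binomial mechanisms, unlike the single relocated symbol in \grr. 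I would therefore finish by composing over the two coordinates: taking $\delta'=\delta/2$ (so that $\ln(2/\delta')=\ln(4/\delta)$ and the two $\delta'$ terms sum to $\delta$) and summing the two per-coordinate bounds yields $\epsilon_c=2\sqrt{14\ln(4/\delta)\cdot\frac{e^{\epsilon_l/2}+1}{n-1}}$, matching the claim, with $\epsilon=\epsilon_l$.

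The step I expect to be the main obstacle is justifying the two-coordinate composition without hand-waving: the $d$ coordinate-counts are \emph{not} jointly independent, since each user's report contains exactly one true $1$, so one cannot simply assert that the whole histogram is a product of independent binomial mechanisms. The safe route is to restrict attention to the two coordinates $v,v'$ actually touched by the neighboring change and argue a two-step hybrid that alters one affected coordinate's true bit at a time, applying the single-coordinate binomial bound (Theorem~\ref{thm:rr_blanket}) at each step so that only the DP guarantee of each step—rather than global independence—is invoked. The remaining details are routine: verifying that the per-bit blanket decomposition is valid (it is, because bits are flipped independently) and checking that the maximal-$\gamma$ choice loses no constants relative to the \grr derivation.
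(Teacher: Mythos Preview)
Your proposal is correct and follows essentially the same route as the paper: per-bit blanket decomposition giving $\gamma=\frac{2}{e^{\epsilon_l/2}+1}$, the observation that neighboring inputs differ at exactly two coordinates, and an application of Theorem~\ref{thm:rr_blanket} at each of those coordinates with $\delta'=\delta/2$ to produce the factor $2$ and the $\ln(4/\delta)$. The one place you are more cautious than necessary is the ``main obstacle'': because each bit of $B$ is flipped with independent coins, the output distribution factorizes over coordinates given the inputs, so the coordinate counts \emph{are} jointly independent and the paper simply invokes this in one line (``By the independence of the bits, probabilities on other locations are equivalent'') rather than building a hybrid.
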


\begin{proof}
	For any two neighboring datasets $D\simeq D'$, w.l.o.g., we assume they differ in the $n$-th value, and $v_n=1$ in $D$, $v_n=2$ in $D'$.  By the independence of the bits, probabilities on other locations are equivalent.  Thus we only need to examine the summation of bits for location $1$ and $2$.  For each location, there are $n-1$ users, each
	reporting the bit with probability
	\begin{align*}
		\forall_{y \in \{0, 1\}}\;\Pr{B[j] \rightarrow y}  = (1-\gamma)\ind{B[j]= y}+\gamma\;\Pr{\mathsf{Uni}(2) = y}
	\end{align*}
	where we slight abuse the notation and use $\mathsf{Uni}(2)$ for $\mathsf{Uni}(\{0,1\})$.  Given that the perturbation probability is $\Pr{1 \rightarrow 0} = \Pr{0 \rightarrow 1} = \frac{1}{e^{\epsilon_l/2}+1} = \gamma/2$, we can calculate that $\gamma = \frac{2}{e^{\epsilon_l/2} + 1}$.
	After shuffling, the histogram of $n - 1$ (except the victim's) such random variables follows $\mathsf{Bin}(n - 1, \gamma/2)$.
	As there are two locations, by Theorem~\ref{thm:rr_blanket}, we have
	$\epsilon_c = 2\sqrt{14\ln (4/\delta) \cdot \frac{e^{\epsilon_l/2} + 1}{n-1}}$.
\end{proof}

\subsubsection{Local Hashing for Shuffling}
While sending $B$ when $d$ is large is fine for each user; with $n$ users, receiving $B$'s from the server side is less tolerable as it incurs $O(d\cdot n)$ bandwidth.  To reduce the communication cost, we propose a hashing-based method, with a tradeoff between computation and communication.  From the server side, it requires more computation cost than the unary-encoding based methods; but the overall communication bandwidth is smaller.  In what follows, we prove the hashing-based method is private in the shuffler model.

We remind the readers that in local hashing, each user reports $H$ and $y= \grr(H(v))$.  The hash function $H$ is chosen randomly from a universal hash family and hashes $v$ from a domain of size $d$ into another domain of size $d'\leq d$; and $\grr$ will report $H(v)$ with probability $\frac{e^{\epsilon_l}}{e^{\epsilon_l}+d'-1}$, and any other value (from the domain of size $d'$) with probability $\frac{1}{e^{\epsilon_l}+d'-1}$ (Equation~\eqref{eq:grr}).
In terms of blanket decomposition, the user reports truthfully with probability $1-\gamma = \frac{e^{\epsilon_l}-1}{e^{\epsilon_l}+d'-1}$; and if the user reports randomly, any value from $[d']$ can be reported with equal probability.
We call this method \slh, which stands for \slhfull.

\begin{theorem}
	\label{thm:olh_privacy}
	Given the $\epsilon_l$-LDP \slh method, after shuffling, the protocol is $(\epsilon_c, \delta)$-DP, where
	\begin{align}
		\epsilon_c & =\sqrt{\frac{14\ln (2/\delta)(e^{\epsilon_l} + d' - 1)}{n - 1}} \nonumber
	\end{align}
\end{theorem}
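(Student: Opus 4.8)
The plan is to mirror the structure of the proof of Theorem~\ref{thm:sue_privacy}, reducing \slh to the Binomial Mechanism of Theorem~\ref{thm:rr_blanket}, but with two points that require care: identifying which part of each report actually supplies data-independent noise, and verifying that the effective noise rate is $\gamma/d'$ rather than the naive hash-collision rate $1/d'$. First I would fix two neighboring datasets $D\simeq D'$ that differ only in the victim (say user $n$), with true value $v$ in $D$ and $v'$ in $D'$. Since every non-victim report $\langle H_i,y_i\rangle$ has an identical distribution under $D$ and $D'$, it suffices to track how the victim's change propagates into the aggregated histogram and to show that the non-victim randomness masks it.

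Next I would invoke the blanket decomposition stated just before the theorem: each user is \emph{truthful} with probability $1-\gamma=\frac{e^{\epsilon_l}-1}{e^{\epsilon_l}+d'-1}$ (reporting $y_i=H_i(v_i)$) and \emph{blanket} with probability $\gamma=\frac{d'}{e^{\epsilon_l}+d'-1}$ (reporting $y_i$ uniform over $[d']$). Fixing a target value $w\in\Domain$, I would study the count $\ind{H_i(w)=y_i}$ summed over users. The crucial observation---and the main obstacle---is that the hash functions $H_i$ are part of the published output, so in the worst case the adversary also knows every non-victim true value; conditioned on these, a truthful user's indicator $\ind{H_i(w)=H_i(v_i)}$ is \emph{deterministic} and therefore contributes nothing to privacy. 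Only the blanket users, whose $y_i$ is uniform even given $H_i$, supply genuinely data-independent noise, matching $H_i(w)$ with probability exactly $1/d'$. This is the step that separates the correct rate from a tempting but wrong one, so it is where I would spend the most care.

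From here the calculation is routine. Among the $n-1$ non-victim users the number that are blanket is $\mathsf{Bin}(n-1,\gamma)$, and each such user independently hits $w$ with probability $1/d'$, so by the binomial-of-binomials identity the blanket noise in component $w$ is $\mathsf{Bin}(n-1,\gamma/d')$ with $\gamma/d'=\frac{1}{e^{\epsilon_l}+d'-1}$. I would then apply Theorem~\ref{thm:rr_blanket} with $p=\frac{1}{e^{\epsilon_l}+d'-1}$ and $n-1$ users, yielding $\epsilon_c=\sqrt{\frac{14\ln(2/\delta)(e^{\epsilon_l}+d'-1)}{n-1}}$. Unlike the unary-encoding case, no leading factor of $2$ (and no split of $\delta$) appears: a single report moves one unit of mass from the count of $v$ to the count of $v'$, a coupled $\pm1$ sensitivity that the Binomial Mechanism absorbs directly, whereas unary encoding perturbs its two affected bits independently and must compose over both.
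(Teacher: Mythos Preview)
Your high-level plan---reveal the blanket decomposition, isolate the blanket users' contribution, and arrive at $\mathsf{Bin}(n-1,1/(e^{\epsilon_l}+d'-1))$---matches the paper's. But there is a genuine gap in your reduction step: you cannot simply ``apply Theorem~\ref{thm:rr_blanket}'' here. That theorem is stated for a mechanism whose \emph{output} is a noisy histogram, i.e.\ one count per domain value with independent Binomial noise. In \slh the adversary does not see a histogram; it sees the full shuffled multiset of pairs $\langle H_i,y_i\rangle$, which is strictly more informative than any derived count $\sum_i\ind{H_i(w)=y_i}$. (Indeed, those counts are not even independent across $w$: a single blanket report simultaneously matches roughly $d/d'$ domain values.) So you must argue, not assume, that the likelihood ratio between $D$ and $D'$ on this richer view collapses to something controlled by Binomial counts.

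This is precisely the part the paper flags as ``technically challenging and highly non-trivial.'' Their route is: augment the output with the set $T$ of truthful non-victims and their reports $R_T$ (post-processing lets you give this away), and then compute the likelihood ratio on $(\result,T,R_T)$ explicitly. The key lemma is a combinatorial claim about the shuffler's permutations: after fixing $R_T$, the number of valid permutations consistent with any placement of the victim is a constant times $\sum_{i\in[n]\setminus T}\ind{\hat H_i(v_n)=\hat y_i}$, so the ratio reduces exactly to $N/N'$ with $N,N'$ the match-counts for $v_n,v'_n$ among blanket reports. Only \emph{after} this reduction do they invoke Chernoff on $N\sim 1+\mathsf{Bin}(n-1,1/(e^{\epsilon_l}+d'-1))$ and $N'$. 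Your write-up needs an argument of this kind in place of the direct appeal to Theorem~\ref{thm:rr_blanket}. Relatedly, the reason no factor of $2$ appears is not a ``coupled $\pm1$'' sensitivity on a histogram; it is that the likelihood ratio is a \emph{single} ratio $N/N'$ rather than a product of two independent one-bit ratios as in unary encoding.
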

The proof of this theorem is technically challenging and is highly non-trivial.  But due to the space limit, we leave the full proof to the appendix and  provide the high-level idea as follows:
We first assume that the server knows which users other than the victim (user $n$) report truthfully (i.e., with probability $1-\gamma = \frac{e^{\epsilon_l}-1}{e^{\epsilon_l}+d'-1}$), and prove that the server can delete these reports from the shuffled reports.  For the remaining reports, we then prove the probability ratio can be simplified to ratio of two Binomial random variables.  Finally, we bound this ratio and obtain $\epsilon_c$ and $\delta$.

\subsubsection{Utility Analysis}
\label{subsec:utility_basic}
Now we analyze the utility of different methods.  We utilize the framework of Theorem 2 from~\cite{wang2017locally} to analyze the accuracy of estimating the frequency of each value in the domain (i.e., Equations~\eqref{eq:grr_aggregate} and~\eqref{eq:olh_aggregate}).
In particular, we measure the expected squared error of the estimation $\tilde{f}_v$, which equals variance, i.e.,
\begin{align*}
	\sum_{v\in \Domain}\EV{(\tilde{f}_v - f_v)^2} = \sum_{v\in \Domain}\Var{\tilde{f}_v}
\end{align*}
Fixing the local $\epsilon_l$, the variances are already summarized in~\cite{wang2017locally}; our analysis extends that into the shuffler setting.  We fix $\epsilon_c$ and estimate variance for different methods.

\mypara{Utility of Generalized Randomize Response}
We first prove the variance of \grr.
\begin{prop}
	\label{thm:utility_grr}
	Given $\epsilon_c$ in the shuffler model, the variance of using \grr is bounded by
	$\frac{\frac{\epsilon_c^2(n - 1)}{14\ln (2/\delta)} - 1}{n\left(\frac{\epsilon_c^2(n - 1)}{14\ln (2/\delta)} - d\right)^2} $.
\end{prop}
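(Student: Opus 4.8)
The plan is to combine two ingredients already in hand: a closed-form expression for the variance of the \grr estimator as a function of the local parameter $\epsilon_l$, and the privacy-amplification identity that ties $\epsilon_l$ to the fixed central budget $\epsilon_c$. First I would write down the variance of the estimator in Equation~\eqref{eq:grr_aggregate}. Since the reports are independent and each $\ind{y_i=v}$ is Bernoulli with success probability $p$ for the $nf_v$ users holding $v$ and $q$ for the rest, the variance decomposes as $\Var{\tilde f_v}=\frac{f_v\,p(1-p)+(1-f_v)\,q(1-q)}{n(p-q)^2}$. Following the framework of~\cite{wang2017locally}, I would pass to the data-independent regime $f_v\to 0$, which is the relevant case for a large, sparse domain, yielding the clean form $\Var{\tilde f_v}=\frac{q(1-q)}{n(p-q)^2}$.

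Next I would substitute the \grr probabilities from Equation~\eqref{eq:grr}, namely $p=\frac{e^{\epsilon_l}}{e^{\epsilon_l}+d-1}$ and $q=\frac{1}{e^{\epsilon_l}+d-1}$, so that $p-q=\frac{e^{\epsilon_l}-1}{e^{\epsilon_l}+d-1}$ and $q(1-q)=\frac{e^{\epsilon_l}+d-2}{(e^{\epsilon_l}+d-1)^2}$. The common factor $(e^{\epsilon_l}+d-1)^2$ cancels, leaving $\Var{\tilde f_v}=\frac{e^{\epsilon_l}+d-2}{n(e^{\epsilon_l}-1)^2}$, a bound in the local parameter $\epsilon_l$ and the domain size $d$ alone.

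The final step re-expresses this in terms of the fixed central budget. Instantiating Theorem~\ref{thm:rr_blanket} on \grr through the blanket decomposition gives the amplification identity $\epsilon_c=\sqrt{14\ln(2/\delta)\cdot\frac{e^{\epsilon_l}+d-1}{n-1}}$, hence $e^{\epsilon_l}+d-1=\frac{\epsilon_c^2(n-1)}{14\ln(2/\delta)}=:M$. Substituting $e^{\epsilon_l}-1=M-d$ and $e^{\epsilon_l}+d-2=M-1$ into the previous display turns it into $\frac{M-1}{n(M-d)^2}$, which is exactly the claimed expression once $M$ is written out.

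The main obstacle is not the algebra, which is routine once both pieces are assembled, but justifying the passage from the exact, $f_v$-dependent variance to the data-independent form: because $p>q$, the convex combination is not dominated by its $f_v=0$ value term-by-term, so I cannot simply call $q(1-q)/(n(p-q)^2)$ a worst-case upper bound. Instead I would argue that in histogram estimation over a large domain almost every $f_v$ is near zero, so the $q(1-q)$ term governs; equivalently, the $f_v$-dependent correction scales as $f_v\,(p(1-p)-q(1-q))/(n(p-q)^2)$, which is lower order and negligible against the leading term, matching the convention under which~\cite{wang2017locally} reports variances. A secondary point of care is the deliberate mismatch between the $n-1$ in the amplification identity (the victim is excluded) and the $n$ in the estimator's normalization; I would keep the two distinct throughout so the final expression carries $n-1$ inside $M$ and $n$ in the outer denominator, exactly as stated.
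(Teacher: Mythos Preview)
Your proposal is correct and follows essentially the same route as the paper: derive $\Var{\tilde f_v}=\frac{q(1-q)}{n(p-q)^2}$ from the Bernoulli decomposition after dropping the $f_v$-dependent term under the small-$f_v$ convention, simplify to $\frac{e^{\epsilon_l}+d-2}{n(e^{\epsilon_l}-1)^2}$, and then substitute the amplification identity $e^{\epsilon_l}+d-1=\frac{\epsilon_c^2(n-1)}{14\ln(2/\delta)}$. Your explicit discussion of why the $f_v$ term is dropped and of the $n$ versus $n-1$ bookkeeping is more careful than the paper's, but the argument is the same.
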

\begin{proof}
	Given the domain size $d$ and the LDP parameter $\epsilon_l$, the variance is given in~\cite{wang2017locally}.
	Here for completeness, we present the full proof.  We will omit these steps in the following proofs.
	Denote $p=\frac{e^{\epsilon_l}}{e^{\epsilon_l} + d- 1}$, $q = \frac{1}{e^{\epsilon_l} + d- 1}$, and $y_i$ is the report of user $i$, we have
	\begin{align*}
		\Var{\tilde{f}_v} = & \Var{\frac{1}{n}\left(\sum_{i\in[n]}\frac{\ind{v = y_i}-q}{p-q} \right)} \\
		=                   & \frac{1}{n^2}\Var{\sum_{i\in[n]}\frac{\ind{v = y_i}}{p - q}}             \\
		=                   & \frac{\sum_{i\in[n]}\Var{\ind{v = y_i}}}{n^2\cdot (p - q)^2}
	\end{align*}
	Here for each of the $n$ users, if the true value is $v$ (there are $nf_v$ of them) we have $\Var{\ind{v = y_i}} = p(1-p)$; otherwise, we have $\Var{\ind{v = y_i}} = q(1 - q)$ for the rest $n(1-f_v)$ users.  Together, we have
	\begin{align*}
		\Var{\tilde{f}_v} = & \frac{nf_v p(1-p) + n(1-f_v) q\left(1-q\right)}{n^2 (p-q)^2}          \\
		=                   & \frac{ q(1-q)}{n(p-q)^2} + \frac{f_v \left(1 - p - q\right)}{n (p-q)}
	\end{align*}
	Plugging in the value of $p$ and $q$, and assuming $f_v$ is small on average, then we have
	\begin{align*}
		\Var{\tilde{f}_v} \leq \frac{ q(1-q)}{n(p-q)^2} = \frac{e^{\epsilon_l} + d- 2}{n(e^{\epsilon_l}- 1)^2}
	\end{align*}
	From~\cite{balle2019privacy}, we have
	$e^{\epsilon_l} + d - 1 = \frac{\epsilon_c^2(n - 1)}{14\ln (2/\delta)}$.  Thus the variance becomes
	$\frac{\frac{\epsilon_c^2(n - 1)}{14\ln (2/\delta)} - 1}{n\left(\frac{\epsilon_c^2(n - 1)}{14\ln (2/\delta)} - d\right)^2}$.
\end{proof}

\mypara{Utility of Unary Encoding (RAPPOR)}
Similarly, we can prove the variance of unary encoding.
\begin{prop}
	\label{thm:utility_ue}
	Given $\epsilon_c$ in the shuffler model, the variance of using unary encoding (RAPPOR) is bounded by
	$\frac{\frac{\epsilon_c^2(n - 1)}{56\ln (4/\delta)} - 1}{n\left(\frac{\epsilon_c^2(n - 1)}{56\ln (4/\delta)} - 2\right)^2}$.
\end{prop}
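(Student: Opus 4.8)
The plan is to mirror the structure of the \grr variance computation in Proposition~\ref{thm:utility_grr}, substituting the perturbation parameters appropriate to unary encoding and then using the privacy amplification relation from Theorem~\ref{thm:sue_privacy} to re-express the bound in terms of $\epsilon_c$.

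First I would set up the estimator for a fixed value $v\in\Domain$. Under unary encoding each user reports a noisy bit at location $v$, so the natural unbiased estimator has the same form as in Equation~\eqref{eq:grr_aggregate}, namely $\tilde{f}_v = \frac{1}{n}\sum_{i\in[n]}\frac{\ind{B_i[v]=1}-q}{p-q}$, where now $p$ is the probability of reporting a $1$ at the true location and $q$ is the probability of reporting a $1$ at a location that is not the true value. Since the per-bit flipping probability is $\frac{1}{\myexp{\epsilon_l/2}+1}$ (recall the factor $\epsilon_l/2$ appears because two bits differ between any two unary encodings), I would read off $p = \frac{\myexp{\epsilon_l/2}}{\myexp{\epsilon_l/2}+1}$ and $q = \frac{1}{\myexp{\epsilon_l/2}+1}$.

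Next, following the same variance derivation as in Proposition~\ref{thm:utility_grr} --- expanding $\Var{\tilde{f}_v}$ into a sum of Bernoulli variances, keeping the dominant $\frac{q(1-q)}{n(p-q)^2}$ term and dropping the $f_v$-dependent term under the assumption that $f_v$ is small on average --- I would arrive at
\begin{align*}
	\Var{\tilde{f}_v} \leq \frac{q(1-q)}{n(p-q)^2} = \frac{\myexp{\epsilon_l/2}}{n\left(\myexp{\epsilon_l/2}-1\right)^2}.
\end{align*}
This step is essentially the computation already carried out for \grr, so the only care needed is substituting the correct unary-encoding values of $p$ and $q$.

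Finally, I would invoke the privacy relation of Theorem~\ref{thm:sue_privacy}. Squaring $\epsilon_c = 2\sqrt{14\ln(4/\delta)\cdot\frac{\myexp{\epsilon_l/2}+1}{n-1}}$ and solving gives $\myexp{\epsilon_l/2}+1 = \frac{\epsilon_c^2(n-1)}{56\ln(4/\delta)}$. Substituting $\myexp{\epsilon_l/2} = \frac{\epsilon_c^2(n-1)}{56\ln(4/\delta)}-1$ (and hence $\myexp{\epsilon_l/2}-1 = \frac{\epsilon_c^2(n-1)}{56\ln(4/\delta)}-2$) into the variance bound yields exactly the stated expression. The main obstacle is bookkeeping rather than conceptual: one must track the factor-of-two differences throughout (the $\epsilon_l/2$ in the perturbation, and the resulting constant $56\ln(4/\delta)$ in place of $14\ln(2/\delta)$) so that the final constants line up correctly.
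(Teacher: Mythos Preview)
Your proposal is correct and follows essentially the same approach as the paper. The paper simply cites the RAPPOR variance $\frac{\myexp{\epsilon_l/2}}{n(\myexp{\epsilon_l/2}-1)^2}$ from~\cite{wang2017locally} and then substitutes using Theorem~\ref{thm:sue_privacy}; you spell out the derivation of that variance (exactly the step the paper omits after doing it once for \grr) and then perform the identical substitution.
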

\begin{proof}
	According to~\cite{wang2017locally}, the variance of RAPPOR given $\epsilon_l$ is
	\begin{align*}
		\frac{e^{\epsilon_l / 2}}{n(e^{\epsilon_l / 2}  - 1)^2}
	\end{align*}
	From Theorem~\ref{thm:sue_privacy}, we have
	$e^{\epsilon_l / 2} + 1 = \frac{\epsilon_c^2(n - 1)}{56\ln (4/\delta)}$.  Thus the variance becomes $\frac{\frac{\epsilon_c^2(n - 1)}{56\ln (4/\delta)} - 1}{n\left(\frac{\epsilon_c^2(n - 1)}{56\ln (4/\delta)} - 2\right)^2}$.
\end{proof}

\mypara{Utility of Local Hashing}
Now we prove the variance of \slh and instantiate $d'$.
\begin{prop}
	\label{thm:utility_slh}
	Given $\epsilon_c$ in the shuffler model, the variance of using \slh is bounded by
	$\frac{\left(\frac{\epsilon_c^2(n - 1)}{14\ln (2/\delta)}\right)^2}{n\left(\frac{\epsilon_c^2(n - 1)}{14\ln (2/\delta)} - d'\right)^2(d' - 1)}$.
\end{prop}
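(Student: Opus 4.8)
The plan is to mirror the proof of Proposition~\ref{thm:utility_grr} and work within the framework of~\cite{wang2017locally}: first obtain the per-user variance of the \slh estimator as a function of $\epsilon_l$ and $d'$, and then substitute the privacy relationship supplied by Theorem~\ref{thm:olh_privacy}. Writing $p=\frac{e^{\epsilon_l}}{e^{\epsilon_l}+d'-1}$ for the \grr truth-reporting probability over the hashed domain of size $d'$, the estimator in Equation~\eqref{eq:olh_aggregate} is linear in the indicators $\ind{H_i(v)=y_i}$, exactly as the \grr estimator is linear in $\ind{y_i=v}$. So I would first reduce to $\Var{\tilde{f}_v}=\frac{1}{n^2(p-1/d')^2}\sum_{i\in[n]}\Var{\ind{H_i(v)=y_i}}$, and the whole problem becomes identifying the Bernoulli parameter of each indicator.

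The crucial step is computing those parameters. For a user with $v_i=v$, the report is $\grr(H_i(v))$, so $\Pr{H_i(v)=y_i}=p$ and the indicator variance is $p(1-p)$. For a user with $v_i\neq v$, I would split on whether the hash collides: with probability $1/d'$ universality gives $H_i(v)=H_i(v_i)$, and then the indicator matches with probability $p$; with probability $1-1/d'$ we have $H_i(v)\neq H_i(v_i)$, and then \grr emits the particular value $H_i(v)$ with probability $q=\frac{1}{e^{\epsilon_l}+d'-1}$. Combining the two cases, the match probability collapses to $q^\ast=\frac{1}{d'}p+\frac{d'-1}{d'}q=\frac{1}{d'}$, precisely the constant subtracted in Equation~\eqref{eq:olh_aggregate}; hence the indicator variance for these users is $\frac{1}{d'}(1-\frac{1}{d'})$. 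This collapse to $1/d'$ is the one genuinely non-routine point, since it is where the universal-hashing property and the \grr calibration interact. Everything after it is algebra.

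With both parameters in hand, and assuming (as in Proposition~\ref{thm:utility_grr}) that $f_v$ is small on average so that the $n(1-f_v)$ non-matching users dominate, I would bound $\Var{\tilde{f}_v}\le\frac{q^\ast(1-q^\ast)}{n(p-q^\ast)^2}$ with $q^\ast=1/d'$. The two constituent quantities simplify to $p-\frac{1}{d'}=\frac{(e^{\epsilon_l}-1)(d'-1)}{d'(e^{\epsilon_l}+d'-1)}$ and $\frac{1}{d'}(1-\frac{1}{d'})=\frac{d'-1}{d'^2}$, so that the $d'^2$ factors and one power of $(d'-1)$ cancel, leaving the clean intermediate bound $\Var{\tilde{f}_v}\le\frac{(e^{\epsilon_l}+d'-1)^2}{n(e^{\epsilon_l}-1)^2(d'-1)}$. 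The only care needed here is bookkeeping of the $(d'-1)$ powers in numerator and denominator.

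Finally I would invoke Theorem~\ref{thm:olh_privacy}, which yields $e^{\epsilon_l}+d'-1=\frac{\epsilon_c^2(n-1)}{14\ln(2/\delta)}$. Substituting this for the numerator, and writing $e^{\epsilon_l}-1=(e^{\epsilon_l}+d'-1)-d'$ for the denominator, turns the intermediate bound into exactly $\frac{\left(\frac{\epsilon_c^2(n-1)}{14\ln(2/\delta)}\right)^2}{n\left(\frac{\epsilon_c^2(n-1)}{14\ln(2/\delta)}-d'\right)^2(d'-1)}$, as claimed. I do not anticipate any obstacle in this last substitution; the entire difficulty of the proposition sits in the variance identification of the preceding paragraph, and in particular in establishing $q^\ast=1/d'$.
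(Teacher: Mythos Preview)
Your proposal is correct and follows essentially the same route as the paper: obtain the local-hashing variance $\frac{(e^{\epsilon_l}+d'-1)^2}{n(e^{\epsilon_l}-1)^2(d'-1)}$ and then substitute $e^{\epsilon_l}+d'-1=\frac{\epsilon_c^2(n-1)}{14\ln(2/\delta)}$ from Theorem~\ref{thm:olh_privacy}. The only difference is that the paper simply cites this variance expression as Equation~(10) of~\cite{wang2017locally}, whereas you rederive it via the $q^\ast=1/d'$ computation; your extra detail is correct and mirrors what the paper did explicitly for \grr in Proposition~\ref{thm:utility_grr}.
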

\begin{proof}
	According to Equation (10) of~\cite{wang2017locally}, the variance of local hashing given $\epsilon_l$ is
	\begin{align}
		\frac{(e^{\epsilon_l} + d' - 1)^2}{n(e^{\epsilon_l}  - 1)^2(d' - 1)}\label{eq:var_lh}
	\end{align}
	From Theorem~\ref{thm:olh_privacy}, we have
	$e^{\epsilon_l} + d' - 1 = \frac{\epsilon_c^2(n - 1)}{14\ln (2/\delta)}$.  Thus the variance becomes
	$\frac{\left(\frac{\epsilon_c^2(n - 1)}{14\ln (2/\delta)}\right)^2}{n\left(\frac{\epsilon_c^2(n - 1)}{14\ln (2/\delta)} - d'\right)^2(d' - 1)} $.
\end{proof}

\mypara{Optimizing Local Hashing}
Note that $d'$ is unspecified.  We can tune $d'$ to optimize variance given a fixed $\epsilon_c$.  Denote $m$ as $\frac{\epsilon_c^2(n - 1)}{14\ln (2/\delta)}$, our goal is to choose $d'$ that minimize this variance $\var(m, d') = \frac{m^2}{n(m-d')^2(d'-1)}$.
By making its partial derivative to $0$, we can obtain that when
\begin{align}
	d'=\frac{m+2}{3}=\frac{\epsilon_c^2(n - 1)}{42\ln (2/\delta)} + \frac{2}{3}\label{eq:opt_d}
\end{align}
the variance is minimized.
Note that $d'$ can only be an integer.
In the actual implementation, we choose $d'$ to be $\lfloor(m+2)/3\rfloor$.
Thus the variance is optimized to $\var(m, \lfloor(m+2)/3\rfloor)$.

\mypara{Comparison of the Methods}
We first observe that the variance of \grr grows with $d$ (as shown in Proposition~\ref{thm:utility_grr}).  When $d$ is large, we should use unary encoding or local hashing.  Between the two, the variance of unary encoding is slightly better, however, its communication cost is higher.
Thus, between \grr and \slh, we can choose the one with better utility by comparing Proposition~\ref{thm:utility_grr} and $\var(m, \lfloor(m+2)/3\rfloor)$.

\subsubsection{Comparison with Parallel Work}
Parallel to our work, \cite{Balcer2019,ghazi2019private} also propose mechanisms to improve utility in this model.
Among them \cite{Balcer2019} gives better utility which does not depend on $|\Domain|$.  Similar to our method, its proof also utilizes Theorem~\ref{thm:rr_blanket}.  But the approach is different.  In particular, \cite{Balcer2019} first transforms the data using one-hot encoding, then independently increment values in each location with probability $p = 1 - \frac{200}{\epsilon_c^2n}\ln(4/\delta)$.  We call this method \aue for appended unary encoding.  As each location is essentially a Bernoulli bit, its variance is
$p(1-p) = \frac{200}{\epsilon_c^2n}\ln(4/\delta)\left(1 - \frac{200}{\epsilon_c^2n}\ln(4/\delta)\right)$.
Compared with Lemma~\ref{thm:utility_slh}, this gives comparable results (differing by only a constant).  But this protocol itself is not LDP.  Moreover, as one-hot encoding is used, the communication cost for each user is linear in $|\Domain|$, which is even worse than \grr.  We will empirically compare with~\cite{Balcer2019} in the experimental evaluation section.

More recently, \cite{arxiv:erlingsson2020encode} also proposed a similar unary-encoding-based method.
We note that~\cite{arxiv:erlingsson2020encode} operate on a novel removal LDP notion.  More specifically, previous (ours included) LDP and shuffler-based LDP literature works with Definition~\ref{def:ldp}, which ensures that for each user, if his/her value changes, the report distribution is similar.  \cite{arxiv:erlingsson2020encode} introduces a novel removal LDP notion inspired by the removal DP.  In particular, removal DP states that for any two datasets $D$ and $D_-$, where $D_-$ is obtained by removing any one record from $D$, the output distributions are similar.  Extending that idea to the local setting, removal LDP states that for each user, whether his/her value is empty or not, the report distribution is similar.  Given that, a unary-encoding-based method similar to RAPPOR~\cite{rappor} is proposed.  The method is similar to the method we described in Section~\ref{sec:unary_encoding}, except that privacy budget $\epsilon_l$ is not divided by $2$.  Interestingly, any $\epsilon$-{Removal LDP} algorithm is also a $2\epsilon$-{Replacement LDP} algorithm, because
\[
	\Pr{\AA(v)\in \results} \leq e^{\epsilon} \Pr{\AA(\bot)\in \results}\leq e^{2\epsilon} \Pr{\AA(v')\in \results}
\]
where $\bot$ is a special ``empty'' input.  As a result, in our LDP setting, the two methods achieves the same utility.

\section{Security Analysis}
\label{sec:system}
This section focuses on the analyzing the security implications of the shuffler model.  We identify different parties and potential attacks.  Then we propose countermeasures using secret sharing and oblivious shuffle in next section.

\subsection{Parties and Attackers}
There are three types of parties in the shuffler model: \textit{users}, the \textit{server}, and the \textit{auxiliary servers} (shufflers).
The auxiliary servers do not exist in the traditional models of DP and LDP; and in DP, the server may share result with some external parties.
Figure~\ref{fig:model} provides an overview of the system model.

\mypara{The Attackers}
From the point of view of a single user, other parties, including the auxiliary server, the server, and other users, could all be adversaries.
We assume all parties have the same level of background knowledge, i.e., all other users' information except the victim's.  This assumption essentially enables us to argue DP-like guarantee for each party.

The prominent adversary is the server.  Other parties can also be adversaries but are not the focus because they have less information.
For example, in the shuffler-based approach, there is only one auxiliary server.  It knows nothing from the ciphertext.

\mypara{Additional Threat of Collusion}
We note that in the multi-party setting, one needs to consider the consequences when different parties collude.  In general, there are many combinations of colluding parties.  And understanding these scenarios enables us to better analyze and compare different approaches.

In particular, the server can collude with the auxiliary servers.  If all the auxiliary servers are compromised, the model is reduced to that for LDP.  Additionally, the server can also collude with other users (except the victim), but in this case the model is still LDP.
On the other hand, if the server only colludes with other users, it is less clear how the privacy guarantee will downgrade.
Other combinations are possible but less severe.  Specifically, there is no benefit if the auxiliary servers collude with the users.
We consider all potential collusions and highlight three important (sets of) adversaries:

\begin{itemize}
	\item
	      $\Adv$: the server itself.

	\item
	      $\Adv_u$: the server colluding with other users.

	\item
	      $\Adv_a$: the server with the auxiliary servers.
\end{itemize}

\begin{figure}
	\centering
	\subfigure{
		\includegraphics[width=0.4\textwidth]{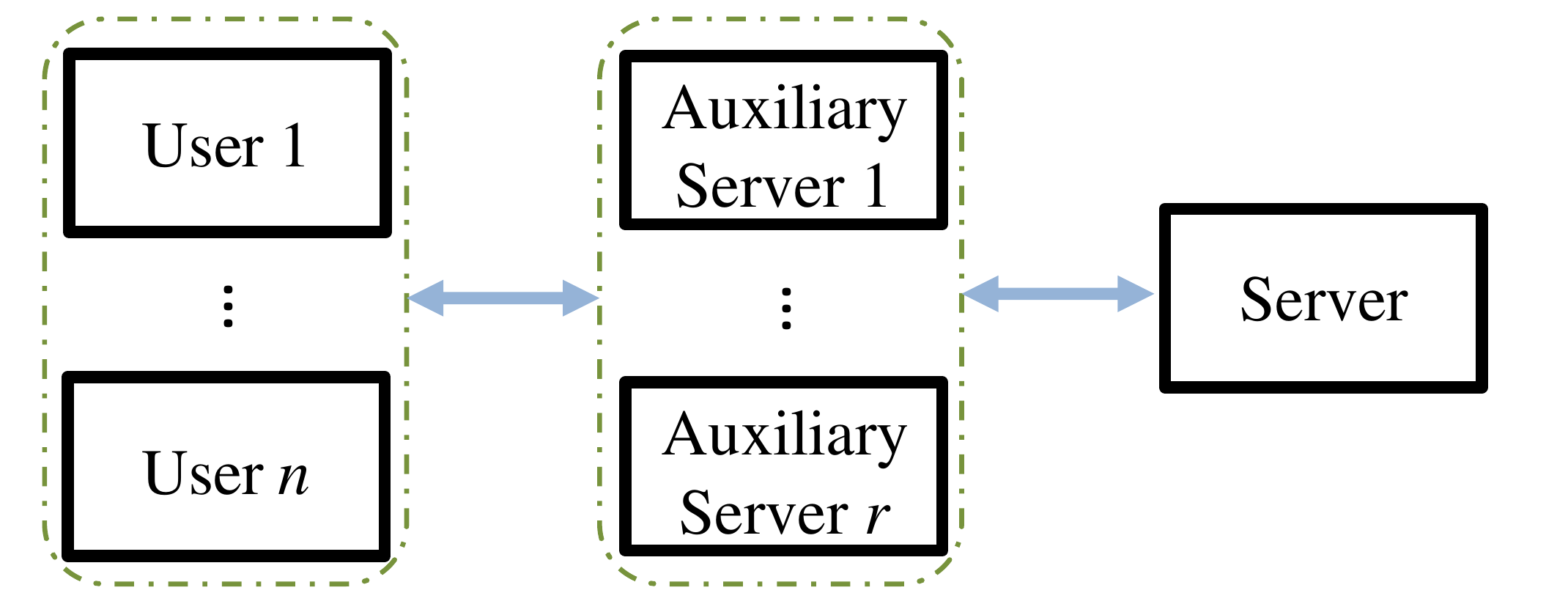}
	}
	\caption{Overview of parties and interactions.  Users communicate with the auxiliary servers.  The auxiliary servers processes the users' data, and communicate with the server.
	}
	\label{fig:model}
\end{figure}

\subsection{Privacy Guarantees of Existing Methods}
\label{subsec:existing_privacy}
Having identified the potential adversaries and the proving technique, now we examine the shuffler-based DP.  The key ideas are (1) We model each attack's view using an algorithm, such that we can prove the DP guarantee. (2) We prove the DP guarantee for each party separately. Existing work focuses on $\Adv$, but we examine the privacy guarantee against each of the $\Adv$'s.  This gives a comprehensive understanding of the system's privacy guarantee.

In particular, existing work showed that if each user executes an $\epsilon_l$-LDP protocol, the view of $\Adv$ is $(\epsilon_c, \delta)$-DP.
If the users collude with the server, the server's view is composed of two parts: the shuffled reports as in $\Adv$, and all users' reports except the victim's.  By subtracting each user's reports from the shuffled result, the server now knows the victim's LDP report; thus the model falls back to the original setting.  Finally, if the shuffler colludes with the server, the model also degrade to the LDP setting.

Note that we assume the cryptographic primitives are safe (i.e., the adversaries are computationally bounded and cannot learn any information from the ciphertext) and there are no side channels such as timing information.
In some cases, the whole procedure can be interactive, i.e., some part of the observation may depend on what the party sends out.  For this, one can utilize composition theorems to prove the DP guarantee.
Moreover, the parties are assumed to follow the protocol in the privacy proofs.
If the parties deviate from the prescribed procedure, we examine the possible deviations and their influences in the next subsection.

\subsection{Robustness to Malicious Parties}
\label{subsec:robust}

There could be multiple reasons for each party to be malicious to (1) interrupt the data collection process, (2) infer more sensitive information from the users, and (3) degrade the utility (estimation accuracy) of the server.  In what follows, for each of the reasons, we analyze the consequence and potential mitigation of different parties.  Note that the server will not deviate from the protocol as it is the initiator, unless to infer more information of the users.

First, any party can try to interrupt the process; but it is easy to mitigate.  If a user blocks the protocol, his report can be ignored.  If the auxiliary server denies the service, the server can find another auxiliary server and redo the protocol.  Note that in this case, users need to remember their report to avoid averaging attacks.

Second, it is possible that the auxiliary server deviates from the protocol (e.g., by not shuffling LDP reports), thus the server has access to the raw LDP reports.  In these cases, the server can learn more information, but the auxiliary server does not have benefits except saving some computational power.  And if the auxiliary server colludes with the server, they can learn more information without any deviation.  Thus we assume the auxiliary server will not deviate in order to infer sensitive information.
For the server, as it only sees and evaluates the final reports; and the reports are protected by LDP, there is nothing the server can do to obtain more information from the users.

Third, we note that any party can degrade the utility.  Any party other than the server has the incentive to do this.
For example, when the server is interested in learning the popularity of websites, different parties can deviate to promote some targeted website.  This is also called the data poisoning attack.
To do this, the most straight-forward way is to generate many fake users, and let them join the data collection process.  This kind of Sybil attack is hard to defend against without some kind of authentication, which is orthogonal to the focus of this paper.
Each user can change the original value or register fake accounts; and this cannot be avoided.  But any ability beyond it is undesirable.
In addition, the protocol should restrict the impact of the auxiliary server on the result.

To summarize, different parties can deviate from the protocol, but we argue that in most cases, a reasonable party has no incentive to do this, other than poisoning the result.  We are mainly concerned about the users or the auxiliary server disrupting utility.

\subsection{Discussion and Key Observations}
\label{subsec:key_obs}

In this section, we first systematically analyze the setting of the shuffler-based DP model.  In addition to the adversary of the server, we highlight two more sets of adversaries.  We then propose to analyze the privacy guarantee against different (sets of) adversaries.  Finally, we discuss the potential concern of malicious parties.  Several observations and lessons are worth noting.

\mypara{When Auxiliary Server Colludes: No Amplification}
When the server colludes with the auxiliary servers, the privacy guarantee falls back to the original LDP model.  When using the shuffler model, we need to reduce the possibility of this collusion, e.g., by introducing more auxiliary servers.

\mypara{When Users Collude: Possibility Missed by Previous Literature}
When proving privacy guarantees against the server, existing work assumes the adversary has access to users' sensitive values but not the LDP output.  While this is possible, we note that if an adversary already obtains users' sensitive values, it may also have access to the users' LDP reports.  Such cases include the users (except the victim) collude with the server; or the server is controlling the users (except the victim).
Thus, the assumption in the shuffle-based amplification work uncommon in real-world scenarios, which makes the privacy guarantee less intuitive to argue.

\mypara{When Parties Deviates: Avoid Utility Disruption}
The protocol should be designed so that each individual user or auxiliary server has limited impact on the estimation result.

\begin{figure*}
	\centering
	\subfigure{
		\includegraphics[width=0.98\textwidth]{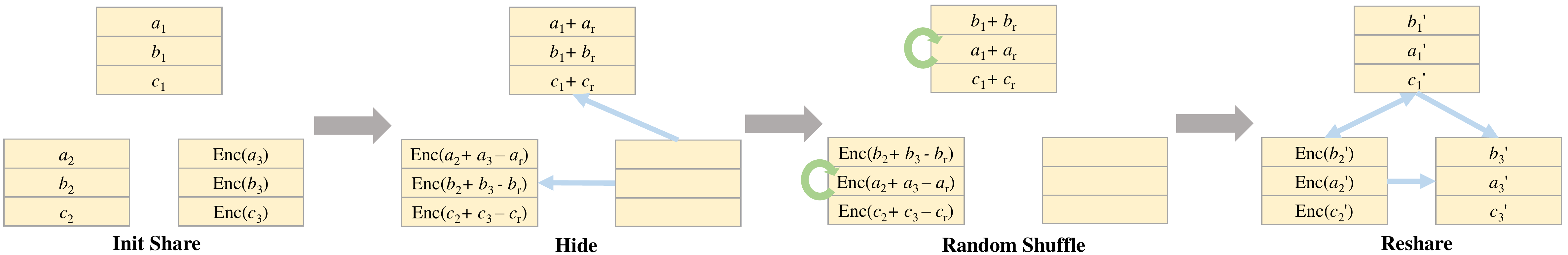}
	}
	\caption{Overview of \OSHA with $r=3$ shufflers and $n=3$ values $a, b, c$.  Each shuffler receives $n$ shares; and one shuffler's shares are encrypted by additive homomorphic encryption.  During hiding, one shuffler sends its shares to the other two shufflers, who then shuffle the aggregated shares with an agreed permutation.  To reshare, each of the shufflers splits its shares and send them to the other shufflers.
	}
	\label{fig:osha}
\end{figure*}

\section{Defending against Attacks}
\label{sec:proposal}
We present a protocol that improves the security guarantee of existing work.  The goal is to simultaneously defend against three threats: (1) the server colludes with the users; (2) the server colludes with the auxiliary servers; (3) data poisoning from each party.

\subsection{Fake Response from Auxiliary Servers}
\label{sec:fake_response}

To defend against the threat when the server colludes with the users, we propose to have the auxiliary servers inject noise.
There can be different ways to do this.
Our approach utilizes uniform fake reports.  The intuition of this approach is that (1) its analysis is compatible with the privacy blanket argument, which will be more clear later; and (2) the expected noise for each value in the domain is the same, thus suitable for obtaining a good privacy amplification effect.
On the server side, after obtaining the estimated frequency $\tilde{f}$, the server recovers the frequency for the original dataset by subtracting the expected noise, i.e.,
\begin{align}
	f'_v = \frac{n + n_r}{n}\tilde{f}_v - \frac{n_r}{n}\frac{1}{d} \label{eq:murs_fo}
\end{align}

Building on top of this, we present efforts to defend against the other two threats, i.e., the server colluding with the auxiliary servers, and data poisoning attack.

\subsubsection{First Attempt: Sequential Shuffle}
\label{subsec:ss}
To improve the trust model of the shuffler-based model, one idea is to introduce a sequence of shufflers, so that as long as one shuffler is trusted, the privacy guarantee remains.  In this case, the task of inserting $n_r$ fake reports can be divided equally among the $r$ auxiliary servers (shufflers).
More specifically, the first shuffler receives the users' LDP reports as input, and draws $n_u = n_r / r$ fake reports.  It then shuffles all the reports and sends them to the second shuffler, who draws another $n_u$ fake reports, shuffles all the reports, and sends them to the next shuffler.  This procedure proceeds until the last shuffler sends the result to the server.  Onion encryption is used during the process; each party decrypts one layer of encryption, and the server obtains $n + n_r$ reports.

However, this approach is vulnerable to poison attacks by the shufflers.  That is, the auxiliary servers can replace the users' reports with any report of their choice to change the final result, and the fake reports each shuffler inserts can be chosen arbitrarily.

To mitigate the first threat, we can use an idea of spot-checking.  That is, the server can add dummy accounts before the system setup, then it can check whether the reports from his accounts are tampered.  For the second threat, we find that it hard to handle.  Specifically, a dishonest auxiliary server may draw fake reports from some skewed (instead of uniform) distribution in order to mislead the analyzer and achieve a desired result; and there is no way to self-prove the randomness he used is truly random.

\subsubsection{Second Attempt: Oblivious Shuffle}
\label{subsec:os}

To overcome the data poisoning attack, our approach is to construct the fake reports using secret sharing, which ensures that as long as one shuffler is honest, the inserted fake reports are uniformly random.
To share an LDP report, we note that for both \grr and \slh, the domain of the report can be mapped to an ordinal group $\{0, 1, \ldots, x\}$, where each index represents one different LDP report.  Thus the LDP reports can be treated as numbers and shared with additive secret sharing.

In order to shuffle shares of secret, we utilize the oblivious shuffle protocol described in Section~\ref{subsec:crypto_primitive}.
More specifically, the $n$ users each splits his/her LDP reports into $r$ shares among the $r$ shufflers.  Each of the shufflers then uniformly draws one share for each of the $n_r$ fake reports.  Thus the shufflers each has $n+n_r$ shares; and the sums of the shares equal to the $n$ reports from users and $n_r$ report that are random.  An oblivious shuffle protocol is then executed among the shufflers to shuffle the $n + n_r$ shares of reports.
Finally the $r$ shufflers send their shares to the server, who combines the shares to obtain the results.
Note that the communication is assumed to be done via secure channels.

This solution suffers from a threat that, even without the server, half of the shufflers can collude to recover the user reports.
To mitigate this concern, we design a new oblivious shuffle protocol \OSHA that uses additive homomorphic encryption (AHE).

\subsubsection{Proposal: \mursofull}
\label{subsec:eos}
To ensure that the shufflers cannot infer the users' reported data, a natural solution is to encrypt the shares using the server's public key.  Moreover, the encryption needs to be additively homomorphic in order to be compatible with the secret-sharing operations.
In what follows, we present a new protocol \OSHAfull (\OSHA) that utilizes additive homomorphic encryption (AHE) in oblivious shuffle.  We then present our proposal \mursofull (shorted for \methodsmc) that uses \OSHA for DP.

\mypara{\OSHAfull}
\OSHAfull (\OSHA) works similarly to oblivious shuffle.  One difference is that in each round, one shuffler will possess the encrypted shares.  The encrypted shares can be shuffled and randomized just like normal shares except that they are then processed under AHE.

Denote the shuffler who possess encrypted shares as $E$.  In each round, $E$ splits its encrypted vector of shares into $t$ new vectors so that $t-1$ of which are plaintexts, and the last one is still in the ciphertext form (this can be done because of AHE).  The $t$ shares are randomly sent to the $t$ hiders.  Only one of them will receive the ciphertext share and become the next $E$.  After the group shuffling, the new $E$ splits its vector of shares and sends them to $r$ parties.  An example of \OSHA with $r=3$ is demonstrated in Figure~\ref{fig:osha}.
\OSHA strengthens \OSH in that even if the $r$ shufflers collude, they cannot figure out the users' original reports, because one share is encrypted.

Note that there is a crucial requirement for the AHE scheme: it should support a plaintext space of $\mathbb{Z}_{2^\ell}$ where $\ell$ is normally 32 or 64 in our case.  This is because the fake reports are sampled locally as random $\ell$-bit shares, and later they will be encrypted and added in AHE form, so that the decrypted result modulo $2^\ell$ looks like other reports.  Otherwise the fakeness will be detected by the server. Such an AHE scheme can be instantiated to be the full-decryption variant of DGK~\cite{damgard2008hom} using Pohlig-Hellman algorithm~\cite{pohlig1978improved}.

\begin{cor}
	\label{cor:osha}
	Encrypted oblivious shuffle, instantiated with additive homomorphic encryption of plaintext space $\mathbb{Z}_{2^\ell}$, is a secure oblivious shuffle protocol in the semi-honest model.
\end{cor}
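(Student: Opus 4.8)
The plan is to prove security in the standard real/ideal simulation paradigm for semi-honest adversaries, reducing the security of \OSHA to two ingredients I may assume: the security of the underlying resharing-based \OSH protocol of~\cite{laur2011round}, and the IND-CPA security of the additive homomorphic encryption scheme. First I would pin down the ideal functionality: on secret-shared inputs, \OSHA should output fresh additive shares of the same multiset of values under a uniformly random permutation that is hidden from every coalition of shufflers. Security then amounts to exhibiting, for each semi-honest coalition, a simulator \mysim that takes only the coalition's inputs and prescribed outputs and produces a transcript computationally indistinguishable from its real view.

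Next I would fix the corrupted coalition and reduce to the essential case. A coalition that includes the server learns nothing beyond the output it is entitled to receive, and any strict subset of shufflers has a view dominated by that of the full set of shufflers; hence the crux is a coalition of all $r$ shufflers that does not hold the server's decryption key. For this coalition I would build \mysim through a two-step hybrid. In the first hybrid I replace the plaintext underlying the single ciphertext share $\myenc(\cdot)$ that circulates through the protocol with an encryption of $0$; because the coalition lacks the decryption key, this hop is indistinguishable by IND-CPA security. It is exactly here that the plaintext-space requirement $\mathbb{Z}_{2^\ell}$ is used: it guarantees the homomorphic additions wrap modulo $2^\ell$, so each masked value is uniform and the substitution leaves the remaining shares identically distributed.

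The key step I would emphasize is that \OSHA preserves the round-by-round message flow of the base \OSH. In each of the $\binom{r}{t}$ rounds the distinguished shuffler $E$ splits its encrypted vector into $t-1$ \emph{plaintext} shares and one \emph{ciphertext} share, and since $\myenc$ satisfies $c_1 \oplus c_2 = \myenc(v_1 + v_2)$, every homomorphic operation mirrors a plaintext operation of the base protocol. In the second hybrid, once the circulating ciphertext carries no information, the remaining $r-1$ plaintext shares evolve exactly as in~\cite{laur2011round}; I may therefore invoke the base-protocol simulator to produce them together with their reshuffle-and-reshare messages from the coalition's inputs and outputs alone. Composing the two hybrids, \mysim samples the $t-1$ plaintext shares uniformly over $\mathbb{Z}_{2^\ell}$, emits a fresh encryption of $0$ for the ciphertext share, and otherwise runs the base simulator; indistinguishability follows by the triangle inequality.

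The main obstacle I anticipate is tracking the circulating ciphertext across all $\binom{r}{t}$ rounds: each time $E$ hands the ciphertext to the next $E$ the adversary observes a new, correlated ciphertext, so a single IND-CPA hop does not immediately suffice. I would handle this by either unrolling the first hybrid into one IND-CPA step per handoff, giving a reduction whose loss is bounded by the round count $\binom{r}{t}$, or by assuming $\myenc$ is re-randomizable so that each circulated ciphertext is fresh and the hops are independent. I would state explicitly which property of the AHE scheme is invoked, since this is what keeps the reduction tight and confirms that even a full coalition of shufflers cannot recover the users' original reports.
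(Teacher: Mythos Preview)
Your proposal is correct and follows the same two-ingredient reduction the paper uses---inherit permutation-hiding from the base resharing shuffle of~\cite{laur2011round} and appeal to the security of the AHE, with the $\mathbb{Z}_{2^\ell}$ plaintext space ensuring the homomorphic additions coincide with ordinary share arithmetic---though you cast it in the simulation paradigm with explicit hybrids while the paper offers only a two-line proof sketch. Your attention to the multi-round ciphertext handoff (and the need for either a per-round IND-CPA hop or re-randomizable ciphertexts) goes beyond anything the paper addresses.
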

\begin{proofsketch}
	The difference of \OSHA from \OSH is that AHE is used for one hider's computation in each round.  As long as AHE does not leak additional information, similar proof about the final shuffling order can be derived from \OSH~\cite{laur2011round}.

	For AHE, note that although we use AHE for one hider's computation in each round, the computation is translated into modulo $2^\ell$ in the plaintext space, which is exactly the same as normal secret sharing computation. Therefore, AHE does not leak additional information as long as the security assumption of the AHE holds (hardness of integer factorization in the case of DGK).
\end{proofsketch}

\mypara{Using \OSHA for Differential Privacy}
To use \OSHA for DP, each user encrypts one share (w.l.o.g., the $r^{th}$ share) using the server's public key $pk_s$ before uploading.  In addition, we have the shufflers add fake reports.  The full description of this protocol is given in Algorithm~\ref{MPC}.
There are three kinds of parties, users, shufflers, and the server.  They all agree to use some method \fo with the same parameter (e.g., $\epsilon$, domain size, etc); the \fo can be either \grr or \slh, depending on the utility, as described in Section~\ref{subsec:utility_basic}.  All the communication is done through a secure channel.  The users split their LDP reports into $r$ shares, encrypt only the $r$-th shares using AHE, and send them to the shufflers.  Each shuffler generate $n_r$ shares for fake reports; only the $r$-th shuffler encrypt the shares with AHE.  In this case, a malicious shuffler can draw its shares from a biased distribution; but those shares will then be ``masked'' by other honest shufflers' random shares and become uniformly random. By Corollary~\ref{cor:osha}, the users' reports are protected from the shufflers; and the server cannot learn the permutation unless he can corrupt more than half of the auxiliary servers.

\begin{algorithm}[t]
	\caption{\methodsmc}
	\label{MPC}
	\begin{algorithmic}[1]
		\User{Value $v_i$}
		\State $Y_i = \fo(v_i)$ \Comment{\fo can be \grr or \slh}
		\State Split $Y_i$ into $r$ shares $\tuple{Y_{i, j}}_{j\in [r]}$
		\For{$j\in[r-1]$}
		\State Send $Y_{i, j}$ to auxiliary server $j$
		\EndFor
		\State Send $c_{i,r}\gets\myenc_{pk}(Y_{i, r})$ to auxiliary server $r$
		\vspace{0.3cm}
		\setcounter{ALG@line}{0}
		\AuxServerP{Shares $\tuple{Y_{i,j}}_{i \in [n]}$}
		\For{$k\in[n_r]$}\Comment{Generate shares of fake reports}
		\State Sample $Y'_{k,j}$ uniformly from output space of \fo
		\EndFor
		\State Participate in \OSHA with $\tuple{Y_{i,j}}_{i \in [n]}$ and $\tuple{Y'_{k,j}}_{k\in[n_r]}$ and send the shuffled result to the server
		\vspace{0.3cm}
		\setcounter{ALG@line}{0}
		\AuxServerR{Encrypted shares $\tuple{c_{i,r}}_{i \in [n]}$}
		\For{$k\in[n_r]$}\Comment{Encrypted shares of fake reports}
		\State Sample $Y'_{k,r}$ uniformly from output space of \fo
		\State $c'_{k,r}\gets\myenc_{pk}(Y'_{k, r})$
		\EndFor
		\State Participate in \OSHA with $\tuple{c_{i,r}}_{i \in [n]}$ and $\tuple{c'_{k,r}}_{k\in[n_r]}$ and send the shuffled result to the server
		\vspace{0.3cm}
		\setcounter{ALG@line}{0}
		\Server{Shares from auxiliary servers}
		\State Decrypt and aggregate the shares to recover $Y$
		\State For any $v\in \Domain$, estimate ${f}'_v$ using $Y$ and Equation~\eqref{eq:murs_fo}
	\end{algorithmic}
\end{algorithm}

\subsection{Privacy Analysis}
\label{subsec:murs_privacy}

Now we analyze the privacy guarantee of \methodsmc.  Because of the usage \OSHA protocol, the server knows all the fake reports and each user's LDP report if it can corrupt more than $\lfloor r/2 \rfloor$ of the shufflers.  And in this case, each user's privacy is only protected by $\epsilon_l$-DP.
On the other hand, as long as the server cannot corrupt more than $\lfloor r/2 \rfloor$ shufflers, the server cannot gain useful information.

In what follows, we assume the server cannot corrupt more than $\lfloor r/2 \rfloor$ shufflers and examine the privacy guarantee of \methodsmc.
The focus is on how the privacy guarantees change after the addition of $n_r$ fake reports.
With these injected reports, what the server can observe is the reports from both users and the shufflers.
If the users collude, the server can subtract all other users' contribution and the privacy comes from the fake reports.  The following corollaries give the precise privacy guarantee:

\begin{cor}
	\label{cor:fr_privacy}
	If \slh is used and \slh is $\epsilon_l$-LDP, then \methodsmc is $\epsilon_c$-DP against the server; and if other users collude with the server, the protocol is $\epsilon_s$-DP, where
	\begin{align}
		\epsilon_s & = \sqrt{14\ln (2/\delta) \cdot \frac{d'}{n_r}}\nonumber                                                           \\
		\epsilon_c & = \sqrt{14\ln (2/\delta) / \left( \frac{n-1}{e^{\epsilon_l} +d' -  1} + \frac{n_r}{d'}\right)}\label{eq:eps_fn_c}
	\end{align}

\end{cor}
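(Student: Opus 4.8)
The plan is to reduce the claim to the Binomial Mechanism of Theorem~\ref{thm:rr_blanket} by identifying, for each coordinate of the histogram, the effective binomial noise contributed by the privacy blanket. Recall from the blanket decomposition of \slh that each honest user reports truthfully with probability $1-\gamma = \frac{e^{\epsilon_l}-1}{e^{\epsilon_l}+d'-1}$ and otherwise reports a value uniform over $[d']$, so $\gamma = \frac{d'}{e^{\epsilon_l}+d'-1}$. Following the strategy already used for Theorem~\ref{thm:olh_privacy}, I would first grant the server the (strictly stronger) knowledge of which of the $n-1$ non-victim users report truthfully and let it delete those reports; what remains are the blanket reports, and for any fixed value $v$ the number of these landing on $v$ is $\mathsf{Bin}\!\left(n-1,\tfrac{1}{e^{\epsilon_l}+d'-1}\right)$, since a blanket report hits $v$ with probability $\gamma/d' = \tfrac{1}{e^{\epsilon_l}+d'-1}$.

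The new ingredient is the $n_r$ fake reports injected by the shufflers. Because each fake report is drawn uniformly from the output space of \fo, it lands on any fixed value $v$ with probability $1/d'$ independently of $v$, so the fake reports contribute an independent $\mathsf{Bin}(n_r, 1/d')$ to the count of $v$. Hence the total blanket noise on an affected coordinate is the sum of two independent binomials, $\mathsf{Bin}\!\left(n-1,\tfrac{1}{e^{\epsilon_l}+d'-1}\right) + \mathsf{Bin}(n_r, 1/d')$. I would then invoke a mild generalization of Theorem~\ref{thm:rr_blanket} in which the guarantee depends only on the expected total blanket mass $np$ on the coordinate, which here equals $\frac{n-1}{e^{\epsilon_l}+d'-1} + \frac{n_r}{d'}$. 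Substituting into $\epsilon_c = \sqrt{14\ln(2/\delta)/(np)}$ yields exactly the stated $\epsilon_c$, matching the structure (no extra factor of two, $2/\delta$ inside the logarithm) of Theorem~\ref{thm:olh_privacy}.

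For the colluding case, if every non-victim user colludes with the server, then the server knows each of their LDP reports exactly and can subtract all of them from the shuffled result, not merely the truthful ones. Consequently the user-generated blanket term vanishes entirely, and the only randomness hiding the victim's value is the $n_r$ fake reports, whose contribution to the affected coordinate remains $\mathsf{Bin}(n_r, 1/d')$ with mean $n_r/d'$. Applying Theorem~\ref{thm:rr_blanket} with $np = n_r/d'$ gives $\epsilon_s = \sqrt{14\ln(2/\delta)\cdot d'/n_r}$. The bound against the server alone is then simply the case in which the user blanket is retained, so both statements fall out of the same computation.

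The step I expect to be the main obstacle is justifying that Theorem~\ref{thm:rr_blanket}, stated for a single $\mathsf{Bin}(n,p)$, extends to a sum of two binomials with distinct success probabilities. The cleanest route is to revisit the underlying divergence estimate imported from Theorem~3.1 of~\cite{balle2019privacy} and observe that the bound is controlled solely by the expected number of blanket samples on the affected coordinates; since the expectations of independent binomials add, the user and fake-report sources merge into the single effective mass above. A secondary point requiring care is confirming that a uniformly random fake report truly hits a fixed value with probability exactly $1/d'$ under the \slh aggregation rule, so that the fake reports reproduce the same uniform-blanket property exploited in Theorem~\ref{thm:olh_privacy}.
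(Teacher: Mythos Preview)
Your proposal is correct and follows essentially the same route as the paper: identify the blanket noise from the $n-1$ non-victim users as $\mathsf{Bin}(n-1,1/(e^{\epsilon_l}+d'-1))$, add the independent $\mathsf{Bin}(n_r,1/d')$ from the uniformly drawn fake reports, and feed the combined mass into the binomial-mechanism bound (the paper phrases this as plugging into Equation~\eqref{eq:proof_ratio} and then reusing the Chernoff tail argument from Theorem~\ref{thm:olh_privacy}). The one place you are actually more careful than the paper is the sum-of-binomials step: the paper simply asserts $\mathsf{Bin}(n-1,p_1)+\mathsf{Bin}(n_r,p_2)=\mathsf{Bin}\bigl(n-1+n_r,\tfrac{(n-1)p_1+n_r p_2}{n-1+n_r}\bigr)$, which is not an equality of distributions, whereas your plan to justify the step via the mean-only dependence of the underlying Chernoff bound from Theorem~3.1 of~\cite{balle2019privacy} is the rigorous way to close exactly the gap the paper elides.
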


\begin{proof}
	The proof is similar to the setting of with \slh, but with $n_r$ more random reports.
	More specifically, when other users collude, privacy is provided by the $n_r$ random reports that are always random, and follow uniform distribution over $[d']$.  Plugging the argument into Equation~\eqref{eq:proof_ratio}, these can be viewed as a random variable that follows Binomial distribution with $\mathsf{Bin}\left(n_r, \frac{1}{d'}\right)$.
	The rest of the proof follows from that for Theorem~\ref{thm:olh_privacy}.

	Similarly, for the privacy guarantee against the server, there are $n-1$ random reports from users, and $n_r$ reports from the auxiliary server.  The effect of both can be viewed as one Binomial random variable: $\mathsf{Bin}\left(n - 1, 1/(e^{\epsilon_l}+d'-1)\right) + \mathsf{Bin}\left(n_r, 1/d'\right) = \mathsf{Bin}\left(n - 1 + n_r, \frac{(n-1)/(e^{\epsilon_l}+d'-1) + n_r/d'}{n - 1 + n_r}\right)$.

\end{proof}

One can also use \grr in \methodsmc, and we have a similar theorem:

\begin{cor}
	\label{cor:fr_privacy_grr}
	If \grr is used and \grr is $\epsilon_l$-LDP, then \methodsmc is $\epsilon_c$-DP against the server; and if other users collude with the server, the protocol is $\epsilon_s$-DP, where
	\begin{align}
		\epsilon_s &= \sqrt{14\ln (2/\delta) \cdot \frac{d}{n_r}}\nonumber\\\epsilon_c &= \sqrt{14\ln (2/\delta) / \left( \frac{n-1}{e^{\epsilon_l} +d -  1} + \frac{n_r}{d}\right)}\nonumber \end{align}
\end{cor}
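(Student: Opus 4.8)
The plan is to mirror the proof of Corollary~\ref{cor:fr_privacy} almost verbatim, substituting the \grr blanket decomposition for the \slh one; the only structural change is that \grr operates directly on the size-$d$ domain $\Domain$ rather than on a hashed range $[d']$. Recall from the privacy-blanket analysis that the output distribution of \grr in Equation~\eqref{eq:grr} decomposes with blanket probability $\gamma = \frac{d}{e^{\epsilon_l}+d-1}$ and uniform part $\mathsf{Uni}(\Domain)$. Hence, for a fixed value $v$, each non-victim user who lands in the blanket contributes a Bernoulli bit with success probability $\gamma/d = \frac{1}{e^{\epsilon_l}+d-1}$, so the $n-1$ non-victim users collectively contribute blanket noise $\mathsf{Bin}\!\left(n-1, \frac{1}{e^{\epsilon_l}+d-1}\right)$ to each histogram component. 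This is precisely the quantity that drives the amplification in Theorem~\ref{thm:rr_blanket}.

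First I would dispatch the collusion case giving $\epsilon_s$. When every other user colludes with the server, the server subtracts all of their contributions, so the only surviving uncertainty is the $n_r$ fake reports injected by the honest auxiliary servers. In the \grr instantiation each fake report is sampled uniformly from the size-$d$ output space, so these contribute, per value, noise $\mathsf{Bin}\!\left(n_r, \frac{1}{d}\right)$. Applying Theorem~\ref{thm:rr_blanket} with effective parameter $np = n_r/d$ yields $\epsilon_s = \sqrt{14\ln(2/\delta)/(n_r/d)} = \sqrt{14\ln(2/\delta)\cdot \frac{d}{n_r}}$, as claimed.

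Next I would treat the $\epsilon_c$ case, where the adversary is the server alone. Here two independent sources of blanket randomness remain: the $n-1$ non-victim user reports contributing $\mathsf{Bin}\!\left(n-1, \frac{1}{e^{\epsilon_l}+d-1}\right)$ and the $n_r$ fake reports contributing $\mathsf{Bin}\!\left(n_r, \frac{1}{d}\right)$. As in Corollary~\ref{cor:fr_privacy}, I would merge these into a single Binomial whose expected count is the sum of the two expected counts, $\frac{n-1}{e^{\epsilon_l}+d-1} + \frac{n_r}{d}$, and feed this as the $np$ parameter into Theorem~\ref{thm:rr_blanket}, giving $\epsilon_c = \sqrt{14\ln(2/\delta)/\left(\frac{n-1}{e^{\epsilon_l}+d-1}+\frac{n_r}{d}\right)}$. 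The probability-ratio reduction and tail bound from the proof of Theorem~\ref{thm:olh_privacy} (Equation~\eqref{eq:proof_ratio}) then carry over once the blanket mass is expressed through this combined expected count, with $d'$ everywhere replaced by $d$.

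The main obstacle is the one already implicit in Corollary~\ref{cor:fr_privacy}: the sum of two Binomials with distinct success probabilities is not itself Binomial, so the merging step must be read as the assertion that the amplification bound depends on the two populations only through their total expected blanket mass $np$. I would justify this by noting that the ratio analysis underlying Theorem~\ref{thm:rr_blanket} is governed by the expected number of always-random reports landing on the relevant coordinate, and that this expectation is additive across the independent user and fake-report populations; extracting $(\epsilon_c,\delta)$ from the resulting ratio is then identical to the \slh derivation. I expect no genuinely new difficulty beyond confirming that replacing $[d']$ by $[d]$ preserves every step.
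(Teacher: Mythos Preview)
Your proposal is correct and matches the paper's approach exactly: the paper simply states that the proof is similar to that of Corollary~\ref{cor:fr_privacy} and omits it, and you have carried out precisely that analogy, replacing $d'$ by $d$ throughout and using the \grr blanket probability $\gamma/d = 1/(e^{\epsilon_l}+d-1)$. The only caveat you raise---that the sum of two Binomials with different success probabilities is not Binomial---is already present (and equally unaddressed) in the paper's proof of Corollary~\ref{cor:fr_privacy}, so your treatment is no less rigorous than the original.
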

The proof is similar to that for Corollary~\ref{cor:fr_privacy} and is thus omitted.

\subsection{Utility Analysis}
\label{subsec:murs_utility}
In Section~\ref{subsec:utility_basic}, we analyze the accuracy performance of different methods under the basic shuffling setting.  In this section, we further analyze the utility of these methods in \methodsmc.  The difference mainly comes from the fact that $n_r$ dummy reports are inserted, and the server runs a further step (i.e., Equation~\eqref{eq:murs_fo}) to post-process the results.  In what follows, we first show that Equation~\eqref{eq:murs_fo} gives an unbiased estimation; based on that, we then provide a general form of estimation accuracy.

We first show $f'_v$ is an unbiased estimation of $f_v$, where $f_v = \frac{1}{n}\sum_{i\in[n]} \ind{v_i=v}$.

\begin{lemma}
	\label{lem:fr_olh_unbias}
	The server's estimation $f'_v$ from Equation~\eqref{eq:murs_fo} is an unbiased estimation of $f_v$, i.e.,
	\begin{align*}
		\EV{\tilde{f}_v} = f_v
	\end{align*}
\end{lemma}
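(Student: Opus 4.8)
The plan is to peel apart the two independent sources of randomness in $f'_v$---the LDP perturbation noise and the sampling of the $n_r$ uniform fake reports---and to show that the deterministic correction in Equation~\eqref{eq:murs_fo} exactly removes the bias that the fake reports inject. The target identity is $\EV{f'_v} = f_v$.

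First I would observe that the quantity $\tilde f_v$ the server forms from the recovered reports $Y$ is precisely the standard \fo estimator (Equation~\eqref{eq:grr_aggregate} or~\eqref{eq:olh_aggregate}) applied to the pooled multiset of $n + n_r$ reports, normalized by the total count $n + n_r$. The defining property of these estimators, already exploited in Proposition~\ref{thm:utility_grr}, is that they are unbiased for the empirical frequency of $v$ in whatever multiset was perturbed. Writing $g_v$ for the (random) fraction of the $n_r$ fake reports that happen to equal $v$, and taking expectation only over the perturbation noise while conditioning on the realized fake reports, I obtain
$$\EV{\tilde f_v \mid g_v} = \frac{n f_v + n_r\, g_v}{n + n_r}.$$

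Next I would take the outer expectation over the fake reports. By the secret-sharing construction of \methodsmc, each fake report is uniform over $\Domain$, so $\EV{g_v} = 1/d$, and the law of total expectation gives $\EV{\tilde f_v} = (n f_v + n_r/d)/(n + n_r)$. Substituting into Equation~\eqref{eq:murs_fo} then yields
$$\EV{f'_v} = \frac{n + n_r}{n}\cdot\frac{n f_v + n_r/d}{n + n_r} - \frac{n_r}{n}\cdot\frac{1}{d} = f_v + \frac{n_r}{n d} - \frac{n_r}{n d} = f_v,$$
which is the claim.

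The only real subtlety---and the step I would be most careful about---is the bookkeeping of randomness: the expectation must range over both the LDP coins and the uniform draws of the fake reports, and the two-stage conditioning above keeps these cleanly separated, so that unbiasedness of the base estimator handles the former while uniformity of the injected noise handles the latter. I would also double-check that $\tilde f_v$ is normalized by the full count $n + n_r$ rather than by $n$; with that normalization the correction factor $\frac{n+n_r}{n}$ in Equation~\eqref{eq:murs_fo} is exactly what is needed for the algebra to collapse to $f_v$. Everything else is routine.
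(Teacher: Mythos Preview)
Your proposal is correct and follows essentially the same route as the paper: compute $\EV{\tilde f_v} = \frac{n f_v + n_r/d}{n+n_r}$ using unbiasedness of the base \fo estimator together with uniformity of the fake reports, then substitute into Equation~\eqref{eq:murs_fo} to cancel the $n_r/(nd)$ term. Your explicit two-stage conditioning (first on the realized fake reports, then over their uniform draw) is a slightly more careful presentation of the same calculation the paper does in one line.
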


\begin{proof}
	\begin{align}
		\EV{f'_v} = & \EV{\frac{n + n_r}{n}\tilde{f}_v - \frac{n_r}{n}\frac{1}{d}} \nonumber                 \\
		=           & \frac{n + n_r}{n}\EV{\tilde{f}_v} - \frac{n_r}{n}\frac{1}{d}\label{eq:est_expectation}
	\end{align}
	Here $\tilde{f}_v$ is the estimated frequency of value $v$ given the $n+n_r$ reports; among them, $n$ of them are from the true users, and $n_r$ are from the randomly sampled values.  For the $n$ reports from users, $nf_v$ of them have original value $v$; and for the $n_r$ reports, in expectation, $n_r/d$ of them have original value $v$.  After perturbation, we have
	\begin{align*}
		\EV{\tilde{f}_v} = \frac{nf_v + n_r/d}{n + n_r}
	\end{align*}
	Putting it back to Equation~\eqref{eq:est_expectation}, we have $\EV{\tilde{f}_v} = f_v$.
\end{proof}

Given that, we prove the expected squared error of $f'_v$:
\begin{align}
	\Var{f'_v} = \Var{\frac{n+n_r}{n}\tilde{f}_v - \frac{n_r}{n}\frac{1}{d}}
	=\frac{(n+n_r)^2}{n^2}\Var{\tilde{f}_v}\nonumber
\end{align}
Now plugging in the results of $\Var{\tilde{f}_v}$ from Section~\ref{subsec:utility_basic} (note that we use replace $n$ with $n+n_r$ in the denominator as there are $n+n_r$ total reports), we obtain the specific variance of different methods after inserting $n_r$ dummy reports.

Corollary~\ref{cor:fr_privacy} gives both $\epsilon_s$ and $\epsilon_c$.  For $\epsilon_s$, $d'$ is fixed given $n_r$ and $\delta$; but we can vary $d'$ given $\epsilon_c$.  In particular, we can also derive the optimal value of $d'$ following the similar to the analysis of Section~\ref{subsec:utility_basic} (after Proposition~\ref{thm:utility_slh}):

Given $\epsilon_c = \sqrt{14\ln (2/\delta) / \left( \frac{n-1}{e^{\epsilon_l} +d' -  1} + \frac{n_r}{d'}\right)}$, we have
\begin{align*}
	e^{\epsilon_l} +d' -  1 = \frac{n-1}{14\ln (2/\delta) /\epsilon_c^2 - n_r/d'}
\end{align*}
We denote it as $m$, and (to simplify the notations) use $a$ to represent $14\ln (2/\delta) /\epsilon_c^2$ and $b$ to represent $n-1$.  By the variance derived above, we have $\var = \frac{m^2}{(m-d)^2(d-1)}\frac{n+n_r}{n^2}$.  Note that this formula is similar to the previous one in Section~\ref{subsec:utility_basic}; but here $m$ also depends on $d'$.  Thus we need to further simplify $\var$:
\begin{align*}
	\var = & \frac{(n+n_r)\left(\frac{b}{a - n_r/d'}\right)^2}{n^2\left(\frac{b}{a - n_r/d'} - d\right)^2(d'-1)} \\
	=      & \frac{(n+n_r)b^2}{n^2\left(b - (a - n_r/d)d'\right)^2(d'-1)}                                        \\
	=      & \frac{(n+n_r)b^2}{n^2a^2\left(d' - (b + n_r)/a\right)^2(d'-1)}
\end{align*}
To minimize $\var$, we want to maximize $\left(d' - (b + n_r)/a\right)^2(d'-1)$.  By making its partial derivative to $0$, we can obtain that when
\begin{align*}
	d'=\frac{(b + n_r)/a+2}{3}=\frac{\epsilon_c^2(n - 1 - n_r)}{42\ln (2/\delta)} + \frac{2}{3}
\end{align*}
the variance is minimized.  Comparing to Equation~\eqref{eq:opt_d}, introducing $n_r$ will reduce the optimal $d'$.  We use the integer component of $d'$ in the actual implementation.

\subsection{Discussion and Guideline}

\methodsmc strengthens the security aspect of the shuffler model from three perspectives:  First, it provides better privacy guarantee when users collude with the server, which is a common assumption made in DP.  Second, it makes the threat of the server colluding with the shufflers more difficult.  Third, it limits the ability of data poisoning of the shufflers.  We discuss criteria for initiating \methodsmc.

\mypara{Choosing Parameters}
Given the desired privacy level $\epsilon_1, \epsilon_2, \epsilon_3$ against the three adversaries $\Adv, \Adv_u, \Adv_a$, respectively.  Also given the domain size $d$, number of users $n$, and $\delta$, we want to configure \methodsmc so that it provides $\epsilon_c\leq \epsilon_1$, $\epsilon_s\leq \epsilon_2$, and $\epsilon_l\leq \epsilon_3$.

Local perturbation is necessary to satisfy $\epsilon_3$-DP against $\Adv_a$.  To achieve $\epsilon_2$ when other users collude, noise from auxiliary servers are also necessary.  Given that, to satisfy $\epsilon_c\le \epsilon_1$, if we have to add more noise, we have two choices.  That is, the natural way is to add noisy reports from the auxiliary server, but we can also lower $\epsilon_l$ at the same time.  As we have the privacy and utility expressions, we can numerically search the optimal configuration of $n_r$ and $\epsilon_l$.  Finally, given $\epsilon_l$, we can choose to use either \grr or \slh by comparing Theorem~\ref{thm:olh_privacy} and Theorem~\ref{thm:utility_grr}.

\section{Evaluation}
\label{sec:experiments}

The purpose of the evaluation is two-fold.  First, we want to measure the utility of \slh, i.e., how much it improves over exsiting work.  Second, we want to measure the communication and computation overhead of \methodsmc, to see whether the technique is applicable in practice.

As a highlight, our \methodsmc can make estimations that has absolute errors of $<0.01\%$ in reasonable settings, improving orders of magnitude over existing work.  The overhead is small and practical.

\subsection{Experimental Setup}\label{ssec:experimental_setup}

\mypara{Datasets}
We run experiments on three real datasets.

\begin{itemize}
	\item
	      IPUMS~\cite{data:ipums}: The US Census data for the year 1940.  We sample $1\%$ of users, and use the city attribute (N/A are discarded).  This results in $n=602325$ users and $d=915$ cities.

	\item
	      Kosarak~\cite{data:FIMI}: A dataset of $1$ million click streams on a Hungarian website that contains around one million users with $42178$ possible values.  For each stream, one item is randomly chosen.

	\item
	      AOL~\cite{data:aol}: The AOL dataset contains user queries on AOL website during the first three months in 2006. We assume each user reports one query (w.l.o.g., the first query), and limit them to be 6-byte long. This results a dataset of around 0.5 million queries including 0.12 million unique ones.  It is used in the succinct histogram case study in Section~\ref{sec:heavy_hitter}.
\end{itemize}

\mypara{Competitors}
We compare the following methods:
\begin{itemize}

	\item
	      \olh: The local hashing method with the optimal $d'$ in the LDP setting~\cite{wang2017locally}.

	\item
	      \had: The Hadamard transform method used in~\cite{aistats:AcharyaSZ18}.  It can be seen as \olh with $d'=2$ (utility is worse than \olh); but compared to \olh, its server-side evaluation is faster.

	\item
	      \MN: The shuffler-based method for histogram estimation~\cite{balle2019privacy}.

	\item
	      \aue: Method from~\cite{Balcer2019}.  It first transforms each user's value using one-hot encoding.  Then the values ($0$ or $1$) in each location is incremented w/p $p = 1 - \frac{200}{\epsilon_c^2n}\ln(4/\delta)$.  Note that it is not an LDP protocol, and its communication cost is $O(d)$.

	\item
	      \rap: The hashing-based idea described in Section~\ref{sec:unary_encoding}.  Its local side method is equivalent to RAPPOR~\cite{rappor}.
	      Similar to \aue, it has large communication cost.

	\item
	      \rapr: Method from~\cite{arxiv:erlingsson2020encode}.  Similar to \aue and \rap, it transforms each user's value using one-hot encoding.  The method works in the removal setting of DP.  When converting to the replacement definition, it has the same utility as \rap.

	\item
	      \slh: The hashing-based idea introduced in Section~\ref{subsec:slh}.

	\item
	      \methodsmc:
	      We focus on the perspective of the computation and communication complexity in Section~\ref{subsec:exp_complexity}.

	\item
	      \methodseq: As a baseline, we also evaluate the complexity of the sequential shuffling method presented in~\ref{subsec:ss}; we call it \methodseq.
\end{itemize}

\mypara{Implementation}
The prototype was implemented using Python 3.6 with fastecdsa 1.7.4, pycrypto 2.6.1, python-xxhash 1.3.0 and numpy 1.15.3 libraries. For \methodseq, we generate a random AES key to encrypted the message using AES-128-CBC, and use the ElGamal encryption with elliptic curve secp256r1 to encrypt the AES key. For the AHE in \methodsmc, we use DGK \cite{damgaard2007efficient} with 3072-bits ciphertext.  All of the encryption used satisfy 128-bit security.

\mypara{Metrics}
We use mean squared error (\mse) of the estimates as metrics.  For each value $v$, we compute its estimated frequency $\tilde{f}_v$ and the ground truth $f_v$, and calculate their squared difference.  Specifically, $\mse = \frac{1}{|\Domain|} \sum_{v \in \Domain} (f_v - \tilde{f}_v)^2 $.

\mypara{Methodology}
For each dataset and each method, we repeat the experiment $100$ times, with result mean and standard deviation reported.  The standard deviation is typically very small, and barely noticeable in the figures. By default, we set $\delta=10^{-9}$.

\subsection{Frequency Estimation Comparison}
We first show the utility performance of \slh.  We mainly compare it against other methods in the shuffler model, including \MN, \aue, \rap, and \rapr.  For comparison, we also evaluate several kinds of baselines, including LDP methods \olh and \had, centralized DP method Laplace mechanism (Lap) that represents the lower bound, and a method Base that always outputs a uniform distribution.

Figure~\ref{fig:mse_epsc} shows the utility comparison of the methods.  We vary the overall privacy guarantee $\epsilon_c$ against the server from $0.1$ to $1$, and plot \mse.
First of all, there is no privacy amplification for \MN when $\epsilon_c$ is below a threshold.  In particular, when $\epsilon_c < \sqrt{\frac{14\ln (2/\delta)d}{n-1}}$, $\epsilon_l=\epsilon_c$.  We only show results on the IPUMS dataset because for the Kosarak dataset, $d$ is too large so that \MN cannot benefit from amplification.  When there is no amplification, the utility of \MN is poor, even worse than the random guess baseline method.
Compared to \MN, our improved \slh method can always enjoy the privacy amplification advantage, and gets better utility result, especially when $\epsilon_c$ is small.
The three unary-encoding-based methods \aue, \rap, and \rapr are all performing similar to \slh.  But the communication cost of them are higher.  The best-performing method is \rapr; but it works in the removal-LDP setting.  Because of this, its performance with $\epsilon_c$ is equivalent to \rap with $2\epsilon_c$.

Moving to the LDP methods, \olh and \had perform very similar (because in these settings, \olh mostly chooses $d'=2$ or $3$, which makes it almost the same as \had), and are around $3$ orders of magnitude worse than the shuffler-based methods.
For the central DP methods, we observe Lap outperforms the shuffler-based methods by around $2$ orders of magnitude.

\begin{figure}[t]
	\centering

	{
		\includegraphics[width=0.47\textwidth]{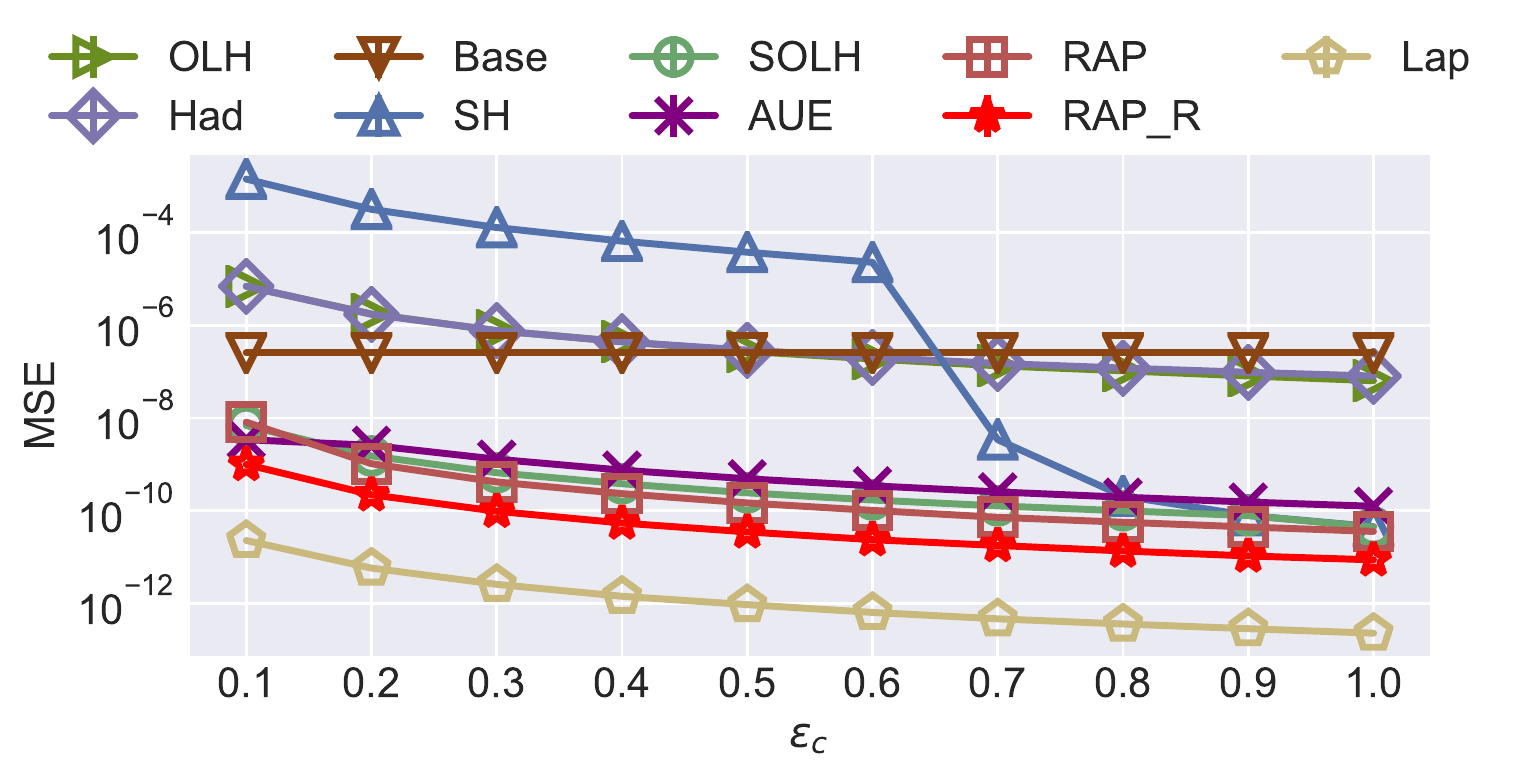}
	}
	\caption{
		Results of \mse varying $\epsilon_c$ on the IPUMS dataset. Base always outputs $1/d$ for each estimation.  Lap stands for Laplace mechanism for DP.
	}
	\label{fig:mse_epsc}
\end{figure}

\begin{table}[h]
	\centering
	\resizebox{0.99\columnwidth}{!}{
		\begin{tabular}{@{}c|c|c|c|c|c@{}}
			\toprule
			Metric                   & \backslashbox{Method}{$\epsilon_c$} & 0.2     & 0.4     & 0.6      & 0.8      \\
			\midrule
			$d'$                     & \slh                                & 45      & 177     & 397      & 705      \\\hline
			\multirow{5}{*}{Utility} & \slh                                & 5.27e-8 & 1.30e-8 & 5.76e-9  & 3.24e-9  \\
			                         & \rapr$(d'=10)$                      & 1.31e-7 & 1.17e-7 & 1.14e-7  & 1.13e-7  \\
			                         & \rapr$(d'=100)$                     & 1.73e-7 & 1.55e-8 & 1.22e-8  & 1.22e-8  \\
			                         & \rapr$(d'=1000)$                    & 1.02e-4 & 2.60e-5 & 4.02e-8  & 3.66e-9  \\
			                         & \rapr                               & 7.82e-9 & 1.92e-9 & 8.53e-10 & 4.78e-10 \\
			\bottomrule
		\end{tabular}
	}
	\caption{
		Comparison of \slh and \rapr in Kosarak.}
	\label{tbl:d_prime}
\end{table}

In Table~\ref{tbl:d_prime}, we list the value of $d'$ of \slh and the utility of \slh and \rapr for some $\epsilon_c$ values.  We also fix $d'$ in \slh and show how sub-optimal choice of $d'$ makes \slh less accurate.  The original domain $d$ is more than $40$ thousand, thus introducing a large communication cost compared to \slh ($5$KB vs $8$B).  The computation cost for the users is low for both methods; but for the server, estimating frequency with \slh requires evaluating hash functions.  We note that as this takes place on server, some computational cost is tolerable, especially the hashing evaluation nowadays is efficient.  For example, our machine can evaluate the hash function 1 million times within 0.1 second on a single thread.

\subsection{Succinct Histograms}
\label{sec:heavy_hitter}
In this section, we apply shuffle model to the problem of succinct histogram (e.g.,~\cite{stoc:BassilyS15,nips:BassilyNST17}) as a case study.
The succinct histogram problem still outputs the frequency estimation; but different from the ordinary frequency or histogram estimation problem, which we focused on in the last section, it handles the additional challenge of a much larger domain (e.g., domain size greater than $2^{32}$).
To deal with this challenge, \cite{nips:BassilyNST17} proposes TreeHist.
It assumes the domain to be composed of fixed-length binary strings and constructs a binary prefix tree.
The root of the tree denotes the empty string.  Each node has two children that append the parent string by $0$ and $1$.  For example, the children of root are two prefixes $0*$ and $1*$, and the grand children of root are $00*$, $01*, 10*$, and $11*$.  The leaf nodes represent all possible strings in the domain.

To find the frequent strings, the algorithm traverses the tree in a breadth-first-search style:  It starts from the root and checks whether the prefixes at its children are frequent enough.  If a prefix is frequent, its children will be checked in the next round.  For each round of checking, an LDP mechanism (such as those listed in Section~\ref{subsec:ldp}) is used.
Note that the mechanism can group all nodes in the same layer into a new domain (smaller than the original domain because many nodes will be infrequent and ignored).  Each user will check which prefix matches the private value, and report it (or a dummy value if there is no match).
In this section, to demonstrate the utility gain of the shuffler model, we use the methods \MN, \slh, \aue, and \rap as the frequency estimator (i.e., the framework of TreeHist stays the same; but the frequency estimator is changed).

In what follows, we empirically compare them to demonstrate the applicability and benefit of the shuffler model.  Following the setting of~\cite{nips:BassilyNST17}, we consider the AOL dataset assuming each user's value is $48$ bits.  We run TreeHist in $6$ rounds, each for $8$ bits ($1$ character).  We set the goal to identify the the top $32$ strings, and in each intermediate round, we identify the top $32$ prefixes.  In the LDP setting, TreeHist divides the users into $6$ groups, as that gives better results.  In the shuffler case, a better approach is to avoid grouping users, but rather dividing $\epsilon_c$ and $\delta_c$ by $6$ for each round.

Figure~\ref{fig:heavy_hitter} shows the results.  We can observe that the except \MN, the other shuffler-based methods outperforms the LDP TreeHist (\olh and \had) .  In addition to the capability of reducing communication cost, another advantage of \slh we observe here is that \slh enables non-interactive execution of TreeHist (note that this is also one reason why the original TreeHist algorithm uses the local hashing idea).
In particular, the users can encode all their prefixes and report together.  The server, after obtaining some frequent prefix, can directly test the potential strings in the next round.
On the other hand, using the unary-encoding-based methods, users cannot directly upload all their prefixes, because the size of a report can be up to $2^{48}$ bits.  Instead, the server has to indicate which prefixes are frequent to the users and then request the users to upload.

\begin{figure}[t]
	\centering
	{
		\includegraphics[width=0.47\textwidth]{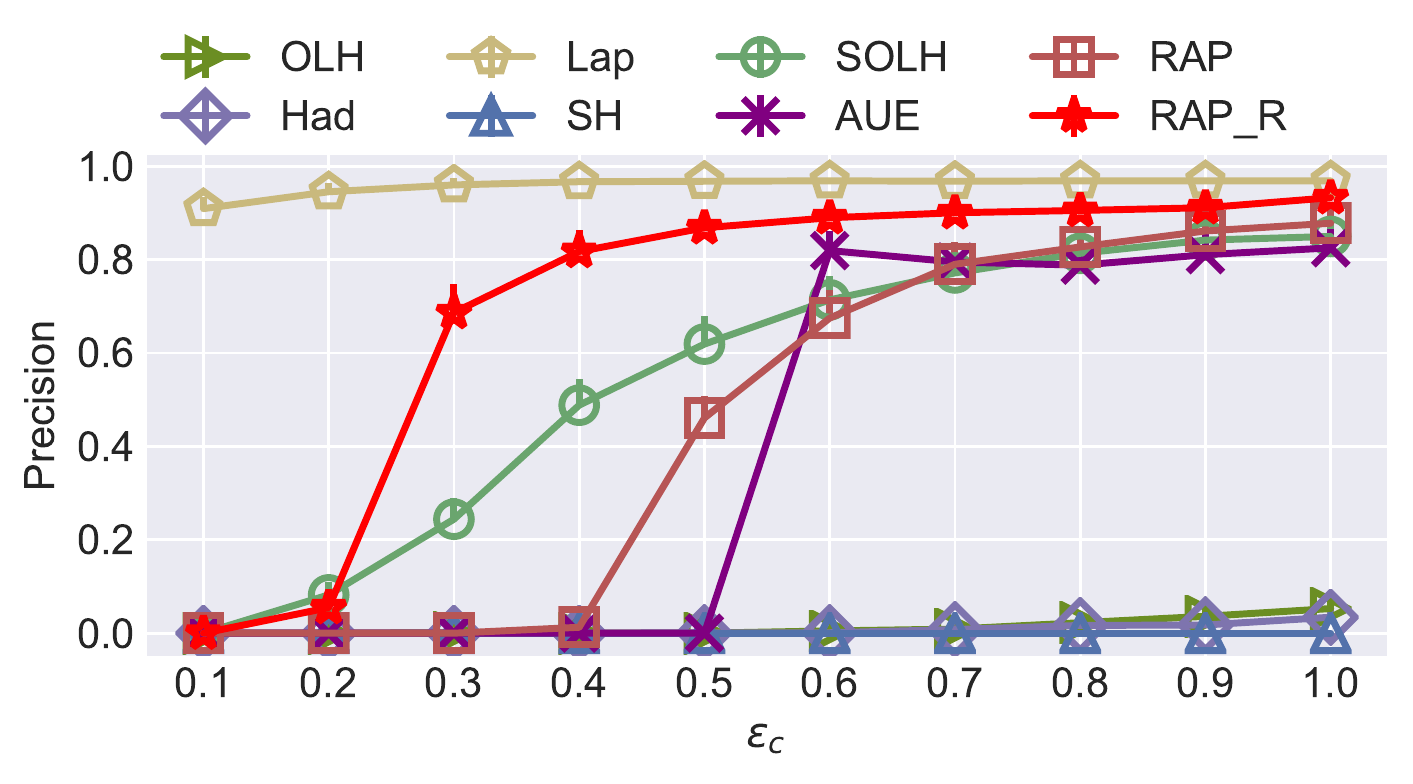}
	}
	\caption{Comparison on the succinct histogram problem.  The target is to identify the top $32$ most frequent values.}
	\label{fig:heavy_hitter}
\end{figure}

\subsection{Performance Evaluation}
\label{subsec:exp_complexity}
We evaluate the computational and communication costs of \methodseq and \methodsmc, focusing on the overhead introduced by the encryption and shuffling.
We run the experiments on servers running Linux kernel version 5.0 with Intel Xeon Silver 4108 CPU @ 1.80GHz and 128GB memory.  We assume there are $r=3$ and $r=7$ shufflers.  The results are listed in Table~\ref{tbl:complexity}.
As both methods scales with $n+n_r$, we fix $n$ to be $1$ million and ignore $n_r$.

Note that we the results are only for \slh with report size fixed at $64$ bits.  If we use \rap in this case, the communication cost will increase proportional to the size of the domain $d$ (by $d/64$).

\begin{table}[h]
	\centering
	\resizebox{0.9\columnwidth}{!}{
		\begin{tabular}{@{}c|c|c|c|c@{}}
			\toprule
			\multirow{2}{*}{\backslashbox{Metric}{Method} } & \multicolumn{2}{c|}{\methodseq} & \multicolumn{2}{c}{\methodsmc}                  \\
			                                                & $r=3$                           & $r=7$                          & $r=3$ & $r=7$  \\ \midrule
			User comp. (ms)                                 & 0.24                            & 0.49                           & 1.6   & 1.6    \\
			User comm. (Byte)                               & 416                             & 800                            & 400   & 432    \\ \hline
			Aux. comp. (s)                                  & 49                              & 50                             & 0.2   & 0.7    \\
			Aux. comm. (MB)                                 & 224                             & 416                            & 429.8 & 3293.3 \\ \hline
			Server comp. (s)                                & 49                              & 49                             & 65    & 65     \\
			Server comm. (MB)                               & 128                             & 128                            & 392   & 408    \\ \bottomrule
		\end{tabular}
	}
	\caption{Computation and communication overhead of \methodseq and \methodsmc for each user, each shuffler, and the server.  We assume $n=10^6$ and $r=3$ or $7$.}
	\label{tbl:complexity}
\end{table}

\mypara{User Overhead}
Overall, the user-side computation and communication overhead are small for both methods.  The computation only involves sampling, secret-sharing, and $r$ times of encryption operations.  All of them are fast.  Note that in \methodseq, as onion encryption is used, its overhead is larger and grows linearly with respect to $r$.  The communication cost for each user is also very limited.

\mypara{Shuffler Overhead}
For each shuffler in \methodseq, the computation cost lies in $n$ decryptions (for one layer), sampling $n_u$ random reports (with necessary encryption), and then shuffling.  Note that the decryptions is done in parallel.  In this implementation, we use $32$ threads for demonstration.  With more resources, the processing time can be shortened.

In \methodseq, an ElGamal ciphertext is a tuple $\tuple{P, C}$,  $P$ is a point in the secp256r1 curve and thus can be represented by $256\times 2$ bits; and $C$ is a number in $\{0, 1\}^{256}$. That is, we need $96$ bytes to store the encrypted AES key for each layer.
For \slh, we let each user randomly select an $4$-byte seed as the random hash function.  After padding, each message is $32 + 96(r+1)$ bytes, where $r$ is the number of layers used for shufflers.  One additional layer is used for the server.
Given $n=1$ million users and $r$ shufflers, there will be on average $\frac{1}{r}\times n\times \sum_{k=1}^r(32+96(k+1)) = 672$ MB data sent to the three shufflers.

\methodsmc is made up of ${r \choose {\lfloor r/2 \rfloor + 1}}$ rounds of sorting.  Since a well-implemented sorting on $1$ million elements takes only several milliseconds, the computation cost of shuffling is minor for the shufflers.  In addition, our protocol require each shuffler do ${r \choose {\lfloor r/2 \rfloor + 1}}\cdot n / r$ homomorphic additions during shuffling.  As Table~\ref{tbl:complexity} indicates, all of these cryptographic operations are efficient.  The cost is no more than one second with $n = 1$ million reports.

According to the analysis of oblivious shuffle from~\cite{laur2011round}, each shuffler's communication cost is $O(2^r\sqrt{r}n)$. In addition, our protocol sends $n$ encrypted shares each round, which introduces another communication cost of $O(2^rn/\sqrt{r})$ by similar analysis (multiplied with a larger constant factor because of the 3072-bit DGK ciphertexts).
In experiments with $1$ million users and 3 shufflers, each shuffler needs to send 430 MB. In a more expensive case with 7 shufflers, it becomes 3.3 GB.  While the communication cost is higher than that of \methodseq, we note that the cost is tolerable in our setting, as the data collection does not happen frequently.

\mypara{Server Overhead}
For \methodseq, the server computation overhead is similar to that of the shufflers, as they all decrypt one layer.  The server's communication cost (measured by amount of data received) is lower though, as there is only one layer of encryption on the data.

In \methodsmc, the server needs to collect data from all $r$ shufflers.  As only one share is encrypted by DGK, the communication overhead is mostly composed of that part and grows slowly with $r$.  The computation overhead is also dominated by decrypting the DGK ciphertexts.

\section{Related Work}
\label{sec:related}

\mypara{Privacy Amplification by Shuffling}
The shuffling idea was originally proposed in Prochlo~\cite{bittau2017prochlo}.  Later the formal proof was given in~\cite{erlingsson2019amplification,cheu2018distributed,balle2019privacy}.
Parallel to our work, \cite{Balcer2019,ghazi2019power} propose mechanisms to improve utility in this model.  They both rely on the privacy blanket idea~\cite{balle2019privacy}.
More recently, \cite{arxiv:erlingsson2020encode} considered an intriguing removal-based LDP definition and work in the shuffler model.
Besides estimating histograms, the problem of estimating the sum of numerical values are also extensively investigated~\cite{ghazi2019private,ghazi2019scalable,balle2020private}.

\mypara{Crypto-aided Differential Privacy}
Different from using shufflers, researchers also proposed methods that utilize cryptography to provide differential privacy guarantees, including~\cite{froelicher2017unlynx,elahi2014privex,melis2015efficient}.
One notable highlight is \cite{chowdhury2019outis}, which proposes Crypt$\epsilon$.  In this approach, users encrypt their values using homomorphic encryption, and send them to the auxiliary party via a secure channel.  The auxiliary server tallies the ciphertext and adds random noise in a way that satisfies centralized DP, and sends the result to the server.  The server decrypts the aggregated ciphertext.  More recently, researchers in~\cite{roth2019honeycrisp} introduce several security features including verification and malice detection.
This line of work does not require LDP protection, thus differs from our approach.  Moreover, to handle the histogram estimation when $|\Domain|$ is larger, the communication overhead is larger than that of ours.

\mypara{Relaxed Definitions}
Rather than introducing the shuffler, another direction to boost the utility of LDP is to relax its \textit{semantic meaning}.
In particular, Wang et al. propose to relax the definition by taking into account the distance between the true value and the perturbed value~\cite{wang2017local_ordinal}.  More formally, given the true value, with high probability, it will be perturbed to a nearby value (with some pre-defined distance function); and with low probability, it will be changed to a value that is far apart.  A similar definition is proposed in \cite{gursoy2019secure,gu2019supporting}.  Both usages are similar to the geo-indistinguishability notion in the centralized setting~\cite{andres2012geo}.
In~\cite{murakami2019utility}, the authors consider the setting where some answers are sensitive while some not (there is also a DP counterpart called One-sided DP~\cite{doudalis2017one}).  The work~\cite{gu2019providing} is a more general definition that allows different values to have different privcay level.  Our work applied to the standard LDP definition, and we conjecture that these definitions can also benefit from introducing a shuffler without much effort.

There also exist relaxed models that seem incompatible with the shuffler model, i.e., \cite{bhowmick2018protection} considers the inferring probability as the adversary's power; and \cite{sigmod:WangDZHHLJ19} utilizes the linkage between each user's sensitive and public attributes.

\mypara{Distributed DP}
In the distributed setting of DP, each data owner (or proxy) has access to a (disjoint) subset of users.  For example, each patient's information is possessed by a hospital.  The DP noise is added at the level of the intermediate data owners (e.g.,~\cite{mcmahan2017federated}).
A special case (two-party computation) is also considered~\cite{he2017composing,rao2019hybrid}.
\cite{mcgregor2010limits} studies the limitation of two-party DP.
In~\cite{DKMMN06}, a distributed noise generation protocol was proposed to prevent some party from adding malicious noise.  The protocol is then improved by~\cite{champion2019securely}.
\cite{mironov2009computational} lays the theoretical foundation of the relationship among several kinds of computational DP definitions.

We consider a different setting where the data are held by each individual users, and there are two parties that collaboratively compute some aggregation information about the users.

\mypara{DP by Trusted Hardware}
In this approach, a trusted hardware (e.g., SGX) is utilized to collect data, tally the data, and add the noise within the protected hardware.  The result is then sent to the analyst.  Google propose Prochlo~\cite{bittau2017prochlo} that uses SGX.  Note that the trusted hardware can be run by the server.
Thus \cite{chan2019foundations} and~\cite{allen2019algorithmic} designed oblivious DP algorithms to overcome the threat of side information (memory access pattern may be related to the underlying data).
These proposals assume the trusted hardware is safe to use.  However, using trusted hardware has potential risks (e.g.,~\cite{biondo2018guard}).  This paper considers the setting without trusted hardware.

\section{Conclusions}
\label{sec:conc}
In this paper, we study the shuffler model of differential privacy from two perspectives.  First, we examine from the algorithmic aspect, and make improvement to existing techniques.  Second, we work from the security aspect of the model, and emphasize two types of attack, collusion attack and data-poisoning attack; we then propose \methodsmc that is safer under these attacks.  Finally, we perform experiments to compare different methods and demonstrate the advantage of our proposed method.

In summary, we improve both the utility and the security aspects of the shuffler model.  For the problem of histogram estimation, our proposed protocol is both more accurate and more secure than existing work, with a reasonable communication/computation overhead.  We also demonstrate the applicability of our results in the succinct histogram problem.

	{
		\bibliographystyle{IEEEtranS}
		\bibliography{bibs/abbrev0,bibs/Ninghui,bibs/privacy}
	}

\appendix
\section{Appendices}

\begin{proof}	
	Denote $\AA$ as the algorithm of \slh in the shuffler model.  Let $\AA(D) =  [\slh(v_{\pi(1)}), \ldots, \slh(v_{\pi(n)})]$ be the output on a dataset $D$, where $\pi$ is a random permutation from $[n]$ to $[n]$.
	W.l.o.g., we assume $D$ and $D'$ differ in the $n$-th value, i.e., $v_n\neq v'_n$.  We denote $\result$ as the output from $\AA(D)$.  It is of the form
	$[\tuple{H_j,y_j}]_{j\in[n]}$.
	To prove $\AA$ is $(\epsilon_c, \delta)$-DP, it suffices to show
	\begin{align*}
		 & \mathsf{Pr}_{{\result} \sim \AA(D)}\left[
			\frac{\Pr{\AA(D)=\result}}{\Pr{\AA(D')=\result}} \ge e^{\epsilon_c}
			\right] \leq \delta
	\end{align*}
	where the randomness is on coin tosses of all users' LDP mechanism and the shuffler's random shuffle.
	We first consider the algorithm $\AA(D)$ that, besides $\result$, also outputs two other values $T$ and $R_T$, where $T$ indicates the indices of the first $n-1$ users who report truthfully (i.e., with probability $1-\gamma = \frac{e^{\epsilon_l}-1}{e^{\epsilon_l}+d'-1}$), and $R_T$ denotes their chosen hash functions and hashed results ($R_T=[\tuple{\hat{H}_i,\hat{y}_i}]_{i\in T}$).  We prove that this algorithm is $(\epsilon_c, \delta)$-DP.  Given that, by the post-processing property of DP, if $\AA(D)$ only outputs $\result$ (this can be seen as a post-processing step that drops $T, R_T$), it is also $(\epsilon_c, \delta)$-DP.  We assume user $n$ also report truthfully.  Notice that if user $n$ report randomly the two probabilities are the same and can be canceled out.

	We first examine $\Pr{\AA(D)=(\result,T,R_T)}$:
	\begin{align}
		  & \Pr{\AA(D)=(\result,T,R_T)} \nonumber
		\\
		= & \sum_\pi \Pr{\pi} \Pr{\AA(D)=(\result,T,R_T)\mid \pi} \nonumber
		\\[-0.3cm]
		= & \sum_{\pi} \Pr{\pi} \left(\underbrace{\prod_{i\in T} \Pr{H_{\pi(i)}} \ind{H_{\pi(i)}=\hat{H}_i\wedge y_{\pi(i)} = \hat{y}_i}}_{\text{reports from users in $T$}} \cdot \right. \label{eq:proof_likelihood}
		\\[-0.6cm]
		  & \left. \underbrace{\prod_{i \in [n-1] \setminus T} \Pr{H_{\pi(i)}} \frac{1}{d'}}_{\text{reports from users in $[n-1] \setminus T$}} \cdot \underbrace{\Pr{H_{\pi(n)}} \ind{H_{\pi(n)}(v_n) = y_{\pi(n)}}}_{\text{report from user $n$}}\right) \nonumber
	\end{align}
	$\Pr{\pi}$ denotes the probability a specific random permutation is chosen ($\Pr{\pi}=1/n!$), $\Pr{H_{\pi(i)}}$ is the probability user $i$ chooses hash function $H_{\pi(i)}$ (assuming there are $h$ possible hash functions, $\Pr{H_{\pi(i)}}=1/h$), and the summation is over all permutation $\pi$.  However, only some $\pi$ ensures $H_{\pi(i)}=\hat{H}_i$ and $y_{\pi(i)} = \hat{y}_i$ for $i\in T$.
	Such a $\pi$ always maps $i\in T$ to fixed locations ($H_{\pi(i)}=\hat{H}_i$ and $y_{\pi(i)} = \hat{y}_i$).  Denote $P=\{\pi\mid \forall i\in T, H_{\pi(i)}=\hat{H}_i\wedge y_{\pi(i)} = \hat{y}_i\}$, we have
	\begin{align}
		\frac{\Pr{\AA(D)=(\result,T,R_T)} }{\Pr{\AA(D')=(\result,T,R_T)} } = & \frac{
			c\sum_{\pi\in P} \ind{ H_{\pi(n)}(v_n) = y_{\pi(n)}}}{c\sum_{\pi\in P} \ind{ H_{\pi(n)}(v'_n) = y_{\pi(n)}}}\nonumber
	\end{align}
	where $c=\Pr{\pi}(\prod_{i \in [n]}\Pr{H_{\pi(i)}})(\prod_{i \in [n-1] \setminus T} \frac{1}{d'}) $ is a constant that does not depend on $v_n$ or $v'_n$, and the denominator is from Equation~\eqref{eq:proof_likelihood} with a similar analysis on $D'$.

	In what follows, we prove $\sum_{\pi\in P} \ind{ H_{\pi(n)}(v_n) = y_{\pi(n)}} = c' \sum_{i\in [n]\setminus T} \ind{\hat{H}_{i}(v_n) = \hat{y}_{i}}$ for some constant $c'$, where we use $\hat{H}_{i}, \hat{y}_{i}$ to denote the report of user $i$.
	Define $R_{-T}$ as reports from $[n]\setminus T$.  First we assume $R_T$ and $R_{-T}$ are non-overlapping (there can be overlaps within $R_T$ or $R_{-T}$).
	In this case, any $\pi\in P$ will ensure reports from $T$ ($[n]\setminus T$, respectively) are mapped within $R_T$ ($R_{-T}$, respectively).
	For any index $i\in [n]\setminus T$ that $n$ is mapped to, denote $c_T$ as the number of possible mappings within $R_T$, and $c_{-T}$ as the number of mappings (random permutations) within $R_{-T}$, there are $c'=c_T\cdot c_{-T} / (n-|T|)$ valid permutations from $P$.

	For the case when $R_T$ and $R_{-T}$ are overlapping
	(this is actually very unlikely as each report involves a randomly chosen hash function from a potentially large hash family, e.g., we use $32$ bits to denote the seed of the hash function in the experiment),
	permutations that map some indices from $T$ to the overlapped reports are also valid.  Denote $c_{T'}$ as the number of such permutations.
	For $i\in [n]\setminus T$, if $\hat{H}_i, \hat{y}_i$ does not appear in $R_T$, we have $c' = c_{T'} \cdot c_{-T}/(n-|T|)$.  If $\exists j\in T$, s.t., $\hat{H}_i, \hat{y}_i$ = $\hat{H}_j, \hat{y}_j$, $n$ can also be mapped to $j$.  Summing up all such $j$'s, we also have $c' = c_{T'} \cdot c_{-T}/(n-|T|)$ valid permutations.
	Combining the two cases, we have:
	\begin{align}
		\frac{\Pr{\AA(D)=(\result,T,R_T)} }{\Pr{\AA(D')=(\result,T,R_T)} }
		= \frac{ \sum_{i\in [n]\setminus T} \ind{\hat{H}_{i}(v_n) = \hat{y}_{i}}}{\sum_{i\in [n]\setminus T} \ind{\hat{H}_{i}(v'_n) = \hat{y}_{i}} }\label{eq:proof_ratio}
	\end{align}

	So far, we have proved that, fixing $\result, T$ and $R_T$, the ratio only depends on the numbers of reports that are random and matches $v_n$ and $v'_n$, respectively.  The high level idea is to show that knowing $T$ and $R_T$ fixes the permutation on values from $T$; and any valid permutation only shuffles values from $[n]\setminus T$ (informally, this can be thought of as the server removes reports from $T$).  Now define
	\begin{align*}
		 & N_{\result, T, R_T} = \sum_{i\in [n]\setminus T} \left(\ind{\hat{H}_{i}(v_n) = \hat{y}_{i}} \right)   \\
		\mbox{ and }
		 & N'_{\result, T, R_T} = \sum_{i\in [n]\setminus T} \left(\ind{\hat{H}_{i}(v'_n) = \hat{y}_{i}}\right),
	\end{align*}
	we want to prove
	\begin{align*}
		     & \mathsf{Pr}_{(\result,T,R_T) \sim \AA(D)}\left[
			\frac{\Pr{\AA(D)=(\result,T,R_T)}}{\Pr{\AA(D')=(\result,T,R_T)}} \ge e^{\epsilon_c}
			\right]                                                                                                                                                                  \\
		     & \mbox{(omit the $(\result,T,R_T) \sim \AA(D)$ part to simplify notations)}                                                                                        \\
		=    & \mathsf{Pr} \left[ \frac{N_{\result, T, R_T}}{N'_{\result, T, R_T}}\geq e^{\epsilon_c}\right]                                                                     \\
		\leq & 1 - \mathsf{Pr}\left[N_{\result, T, R_T} \leq \theta e^{\epsilon_c/2} \wedge N'_{\result, T, R_T}\geq \theta e^{-\epsilon_c/2} \right]                            \\
		\leq & \mathsf{Pr}\left[N_{\result, T, R_T} \geq \theta e^{\epsilon_c/2}\right] + \mathsf{Pr}\left[ N'_{\result, T, R_T}\leq \theta e^{-\epsilon_c/2} \right]\leq \delta
	\end{align*}where $\theta$ is some constant.  For $(\result,T,R_T)$ generated from a random run of $\AA(D)$, we can show $N_{\result, T, R_T}$ and $N'_{\result, T, R_T}$ follow Binomial distributions.  In particular, as we assumed user $n$ always report truth, there must be $H_n(v_n)=y_n$; the remaining $n-1$ users will first decide whether to report truthfully (i.e., with probability $(e^{\epsilon_l} -1) / (e^{\epsilon_l} + d' - 1)$), and if user $i$'s report $\tuple{H_i, y_i}$ is random, we have $\Pr{H_i(v_n) = y_i} = 1/d'$.  Each user's reporting process are thus modeled as two Bernoulli processes.  As a result, $N_{\result, T, R_T}$ follows the Binomial distribution $\mathsf{Bin}(n - 1, 1 / (e^{\epsilon_l} + d' - 1))$ plus a constant $1$.
	Similarly, $N'_{\result, T, R_T}\sim \mathsf{Bin}(n - 1, 1 / (e^{\epsilon_l} + d' - 1)) + \ind{H_n(v'_n)=y_n} \geq \mathsf{Bin}(n - 1, 1 / (e^{\epsilon_l} + d' - 1))$.
	The rest of the proof follows that in the later part of the proof of Theorem 3.1 from~\cite{balle2019privacy}.  The high-level idea is to use Chernoff bound to prove the two probabilities $\Pr{\mathsf{Bin}(n - 1, 1 / (e^{\epsilon_l} + d' - 1)) + 1\geq \theta e^{\epsilon_c/2}}$ as well as $\Pr{\mathsf{Bin}(n - 1, 1 / (e^{\epsilon_l} + d' - 1))\leq \theta e^{-\epsilon_c/2}}$ are equal to or smaller than $\delta/2$ when
	$\theta=\frac{n-1}{e^{\epsilon_l} + d' - 1}\geq \frac{14k \ln (2/\delta)}{(n-1)\epsilon_c^2}$.
\end{proof}

\end{document}